\DeclareSymbolFont{operators}{OT1}{cmr}{m}{n}
\DeclareSymbolFont{letters}{OML}{cmm}{m}{it}
\DeclareSymbolFont{symbols}{OMS}{cmsy}{m}{n}
\DeclareSymbolFont{largesymbols}{OMX}{cmex}{m}{n}
\newtheorem{theorem}{Theorem}
\newtheorem{proposition}{Proposition}
\newtheorem{lemma}{Lemma}
\newtheorem{definition}{Definition}
\newtheorem{example}{Example}
\newtheorem{remark}{Remark}
\newtheorem{corollary}{Corollary}
\definecolor{darkgreen}{RGB}{0,150,0}
\definecolor{darkblue}{RGB}{0,0,0}
\newcommand{\preg}{\emph{preg}}
\newcommand{\notranz}[1]{\buildrel \text{$#1$}\over \nrightarrow}
\newcommand{\height}{\textit{height}}
\newcommand{\ET}{\textnormal{$\textit{E}$}}
\newcommand{\FINTREEPRED}{\textit{FTP}}
\newcommand{\ETz}{\textnormal{$\textit{E}_{\FINTREEPRED}$}}
\newcommand{\ETdag}{\textnormal{$\textit{E}_{\FINTREEPRED}^\partial$}}
\newcommand{\xleadsto}[1]{\textnormal{\hspace{3pt}$\overset{#1}{\leadsto}$\hspace{3pt}}}
\newcommand{\calB}{\textnormal{$\mathcal B$}}
\newcommand{\calP}{\textnormal{$\mathcal P$}}
\newcommand{\calA}{\textnormal{$\mathcal A$}}
\newcommand{\calQ}{\textnormal{$\mathcal Q$}}
\newcommand{\ct}{\textnormal{$\kappa$}}
\newcommand{\pimp}{\textnormal{$\mathcal{P^I}$}}
\newcommand{\sati}{\textnormal{${\propto}$}}
\newcommand{\rlsat}{\textnormal{$\leadsto, \sati, \sigma \models\,$}}
\newcommand{\sstrans}[1]{\textnormal{$\rightarrow_{{#1}}$}}
\newcommand{\sssat}[1]{\textnormal{$\ltimes_{{#1}}$}}
\newcommand{\onedagpar}[1]{{\textnormal{$\partial_{{#1}}$}}}
\newcommand{\onedag}{\onedagpar{\calB,\calQ}}
\newcommand{\onedagnact}{\onedagpar{\emptyset,\calQ}}
\newcommand{\setrules}{\textnormal{$\cal R$}}
\newcommand{\setrulestranz}{\textnormal{$\setrules^{\calA}$}}
\newcommand{\setrulessat}{\textnormal{$\setrules^{\calP}$}}
\newcommand{\setrulesG}{\textnormal{${\cal R}_G$}}
\newcommand{\setrulestranzG}{\textnormal{$\setrules^{\calA}_G$}}
\newcommand{\setrulessatG}{\textnormal{$\setrules^{\calP}_G$}}
\newcommand{\actpar}[1]{{\textnormal{$I^+_{#1}$}}}
\newcommand{\nactpar}[1]{{\textnormal{$I^-_{#1}$}}}
\newcommand{\satpar}[1]{{\textnormal{$J^+_{#1}$}}}
\newcommand{\nsatpar}[1]{{\textnormal{$J^-_{#1}$}}}
\newcommand{\act}{{\textnormal{$I^+$}}}
\newcommand{\nact}{{\textnormal{$I^-$}}}
\newcommand{\sat}{{\textnormal{$J^+$}}}
\newcommand{\nsat}{{\textnormal{$J^-$}}}
\newcommand{\all}{{\textnormal{$L$}}}
\newcommand{\rlAxpref}{1}
\newcommand{\rlsumLpref}{2}
\newcommand{\rlsumRpref}{3}
\newcommand{\rlAxcp}{4}
\newcommand{\rlPL}{5}
\newcommand{\rlPR}{6}
\newcommand{\rlPpref}{7}
\newcommand{\rlanin}{8}
\newcommand{\rlpnin}{9}
\newcommand{\rlaipa}{10}
\newcommand{\rlaipp}{11}
\newcommand{\eqcomm}{1}
\newcommand{\eqassoc}{2}
\newcommand{\eqidem}{3}
\newcommand{\equnit}{4}
\newcommand{\eqcp}{5}
\newcommand{\eqdagdlt}{6}
\newcommand{\eqdagcpin}{7}
\newcommand{\eqdagcpnin}{8}
\newcommand{\eqdagain}{9}
\newcommand{\eqdagaino}{9.1}
\newcommand{\eqdagaint}{9.2}
\newcommand{\eqdagainth}{9.3}
\newcommand{\eqdagainf}{9.4}
\newcommand{\eqdaganin}{10}
\newcommand{\eqdagrec}{11}
\newcommand{\eqdagsum}{12}
\newcommand{\aaipld}{13}
\newcommand{\aaiprd}{14}
\newcommand{\aaipact}{15}
\newcommand{\aaipdeadf}{16}
\newcommand{\aaipdeads}{17}
\newcommand{\aaippred}{18}
\begin{document}


\title{Axiomatizing GSOS with Predicates\thanks{The authors have been 
 been partially supported by the projects ``New Developments in
 Operational Semantics" (nr.~080039021), ``Meta-theory of Algebraic
 Process Theories" (nr.~100014021), and ``Extending and Axiomatizing Structural Operational Semantics: Theory and Tools'' (nr. 110294-0061) of the Icelandic Research Fund.
 The work on the paper was partly carried out while Luca
 Aceto and Anna Ingolfsdottir held an Abel Extraordinary Chair at
 Universidad Complutense de Madrid, Spain, supported by the NILS
 Mobility Project.}}

\author {
Luca Aceto \qquad
Georgiana Caltais \qquad
Eugen-Ioan Goriac \qquad
Anna Ingolfsdottir
\email{
[luca,gcaltais10,egoriac10,annai]@ru.is
}
\institute {
ICE-TCS, School of Computer Science, Reykjavik University, Iceland
}
}

\maketitle

\pagestyle{plain}

\begin{abstract}
In this paper, we introduce an extension of the GSOS rule format with predicates such as termination, convergence and divergence. For this format we generalize the technique proposed by Aceto, Bloom and Vaandrager for the automatic generation of ground-complete axiomatizations of bisimilarity over GSOS systems. Our procedure is implemented in a tool that receives SOS specifications as input and derives the corresponding axiomatizations automatically. This paves the way to checking strong bisimilarity over process terms by means of theorem-proving techniques.
\end{abstract}

\section{Introduction}
\label{sec:intro}

One of the greatest challenges in computer science is the development
of rigorous methods for the specification and verification of reactive
systems, {\it{i.e.}}, systems that compute by interacting with their
environment. Typical examples include embedded systems, control
programs and distributed communication protocols.  Over the last three
decades, process algebras, such as ACP~\cite{BaetenBR2010},
CCS~\cite{Mi89} and CSP~\cite{Ho85},
have been successfully used as common languages for the description of
both actual systems and their specifications. In this context,
verifying whether the implementation of a reactive system complies to
its specification reduces to proving that the corresponding
process terms are related by some notion of behavioural equivalence or
preorder~\cite{Glabbeek01}. 

One approach to proving equivalence between two terms is to exploit
the equational style of reasoning supported by process algebras. In
this approach, one obtains a (ground-)complete axiomatization of the
behavioural relation of interest and uses it to prove the equivalence
between the terms describing the specification and the implementation
by means of equational reasoning, possibly in conjunction with proof
rules to handle recursively-defined process specifications. 

Finding a ``finitely specified", (ground-)complete axiomatization of a 
behavioural equivalence over a process algebra is often a highly non-trivial
task. However, as shown in~\cite{Aceto:1994:TSR:184662.184663} in the
setting of bisimilarity~\cite{Mi89,Pa81}, this process can be
automated for process languages with an operational semantics given in
terms of rules in the GSOS format of Bloom, Istrail and
Meyer~\cite{Bloom:1995:BCT:200836.200876}. In that reference, Aceto,
Bloom and Vaandrager provided an algorithm that, given a GSOS language
as input, produces as output a ``conservative extension" of the
original language with auxiliary operators together with a finite
axiom system that is sound and ground-complete with respect to
bisimilarity (see, {\it e.g.}, \cite{DBLP:conf/concur/Aceto94,GazdaFokkink2010,DBLP:conf/concur/1994,DBLP:journals/tcs/Ulidowski00} for further results in this line of research). As the operational specification of several operators
often requires a clear distinction between successful termination and
deadlock, an extension of the above-mentioned approach to the setting of GSOS with a predicate for termination was proposed in~\cite{DBLP:journals/jlp/BaetenV04}.

In this paper we contribute to the line of the work
in~\cite{Aceto:1994:TSR:184662.184663} and
\cite{DBLP:journals/jlp/BaetenV04}. Inspired by~\cite{DBLP:journals/jlp/BaetenV04}, we introduce the {\preg} rule
format, a natural extension of the GSOS format with an arbitrary collection of predicates such
as termination, convergence and divergence. We further adapt the
theory in~\cite{Aceto:1994:TSR:184662.184663} to this setting and give a procedure for
obtaining ground-complete axiomatizations for bisimilarity over
{\preg} systems.
More specifically, we develop a general procedure that, given a
{\preg} language as input, automatically synthesizes a conservative
extension of that language and a finite axiom system that, in
conjunction with an infinitary proof rule, yields a sound and
ground-complete axiomatization of bisimilarity over the extended
language.  The work we present in this paper is based on the one
reported
in~\cite{Aceto:1994:TSR:184662.184663,DBLP:journals/jlp/BaetenV04}. However, handling more general predicates than immediate termination requires
the introduction of some novel technical ideas. In particular, the
problem of axiomatizing bisimilarity over a {\preg} language is
reduced to that of axiomatizing that relation over finite trees whose
nodes may be labelled with predicates. In order to do so, one needs to
take special care in axiomatizing negative premises in rules that may
have positive and negative premises involving predicates and
transitions.

The results of the current paper have been used for the implementation
of a Maude~\cite{DBLP:conf/maude/2007} tool \cite{pregax-calco-tools2011}
that enables the user to specify {\preg} systems in a uniform
fashion, and that automatically derives the associated
axiomatizations. The tool is available at {\footnotesize{\url{http://goriac.info/tools/preg-axiomatizer/}}}. This paves the way to checking bisimilarity
over process terms by means of theorem-proving techniques for a large
class of systems that can be expressed using {\preg} language specifications.

\paragraph{Paper structure.}
{ In Section~\ref{sec:prelim} we introduce the {\preg} rule format.
In Section~\ref{sec:fintree} we
introduce an appropriate ``core" language for expressing finite trees
with predicates. We also provide a ground-complete axiomatization for
bisimilarity over this type of trees, as our aim is to prove
the completeness of our final axiomatization by head normalizing
general {\preg} terms, and therefore by reducing the completeness problem for arbitrary languages to that for trees.

Head normalizing general {\preg} terms is not a straightforward
process. Therefore, following~\cite{Aceto:1994:TSR:184662.184663}, in
Section~\ref{sec:smooth} we introduce the notion of smooth and distinctive
operation, adapted to the current setting. These operations are
designed to ``capture the behaviour of general {\preg} operations",
and are defined by rules satisfying a series of syntactic constraints
with the purpose of enabling the construction of head normalizing axiomatizations. Such axiomatizations are based on a collection of equations that describe the interplay between smooth and distinctive operations, and the operations in the signature for finite trees.
The existence of a sound and ground-complete axiomatization characterizing
the bisimilarity of {\preg} processes is finally proven in
Section~\ref{sec:completeness}. A technical discussion on why it is important to handle predicates as first class notions, instead of encoding them by means of transition relations, is presented in
Section~\ref{sec:rationale}. In Section~\ref{sec:conclusions} we
draw some conclusions and provide pointers to future work.

\section{GSOS with predicates}
\label{sec:prelim}

In this section we present the {\preg} systems which are a generalization of GSOS \cite{Bloom:1995:BCT:200836.200876} systems.

Consider a countably infinite set $V$ of \emph{process variables} (usually denoted by $x$, $y$, $z$) and a signature $\Sigma$ consisting of a set of \emph{operations} (denoted by $f$, $g$). The set of \emph{process terms} ${\mathbb{T}}(\Sigma)$ is inductively defined as follows: each variable $x \in {\it V}$ is a term; if $f \in \Sigma$ is an operation of arity $l$, and if $S_1, \ldots, S_l$ are terms, then $f(S_1, \ldots, S_l)$ is a term. We write $T(\Sigma)$ in order to represent the set of \emph{closed process terms} (\textit{i.e.}, terms that do not contain variables), ranged over by $t, s$.
A \emph{substitution} $\sigma$ is a function of type $V \rightarrow {\mathbb T}(\Sigma)$. If the range of a substitution is included in $T(\Sigma)$, we say that it is a \emph{closed substitution}. Moreover, we write $[x \mapsto t]$ to represent a substitution that maps the variable $x$ to the term $t$.
Let $\vec{x} = x_1, \ldots, x_n$ be a sequence of pairwise distinct variables. A $\Sigma$-\emph{context} $C[\vec{x}]$ is a term in which at most the variables $\vec{x}$ appear. For instance, $f(x,f(x,c))$ is a $\Sigma$-context, if the binary operation $f$ and the constant $c$ are in $\Sigma$.

Let $\calA$ be a finite, nonempty set of \emph{actions} (denoted by $a$, $b$, $c$). A \emph{positive transition formula} is a triple $(S, a, S')$ written $S \xrightarrow{a} S'$, with the intended meaning: process $S$ performs action $a$ and becomes process $S'$. A \emph{negative transition formula} $(S, a)$ written $S \notranz{a}$, states that process $S$ cannot perform action $a$. Note that $S, S'$ may contain variables. The ``intended meaning" applies to closed process terms.

We now define {\preg} -- \emph{pr}edicate \emph{e}xtension of the \emph{G}SOS rule format.
Let $\calP$ be a finite set of \emph{predicates} (denoted by $P, Q$). A \emph{positive predicate formula} is a pair $(P, S)$, written $PS$, saying that process $S$ satisfies predicate $P$. Dually, a \emph{negative predicate formula} $\neg P\,S$ states that process $S$ does not satisfy predicate $P$.

\begin{definition}[{\preg} rule format]
\label{def:apreg}
Consider $\cal A$, a set of actions, and $\calP$, a set of predicates.
\begin{enumerate}\itemsep1pt
\setlength\itemsep{1ex}
\item A \emph{{\preg} transition rule} for an $l$-ary operation $f$ is a deduction rule of the form:

\[
\dfrac{
\begin{array}{c@{~~~}c}
  \{ x_i \xrightarrow{a_{ij}} y_{ij} \mid i \in \act, j \in \actpar{i}  \} &
  \{ P_{ij} x_i \mid i \in \sat, j \in \satpar{i} \} 
  \\
  \{ x_i \notranz{b} \hspace{7pt} \mid i \in \nact, b \in \calB_i \} &
  \{ \neg Q x_i \mid i \in \nsat, Q \in \calQ_i \}
\end{array}
}
{
f(x_1, \ldots, x_l) \xrightarrow{c} C[\vec{x}, \vec{y}]
}
\]

where
\begin{enumerate}\itemsep1pt
\item $x_1, \ldots, x_l$ and $y_{ij}$ $(i \in \act, j \in \sat)$ are pairwise distinct variables;
\item $\act,\sat,\nact,\nsat \subseteq \all = \{1,\ldots,l\}$ and each $\actpar{i}$ and $\satpar{i}$ is finite;
\item $a_{ij}, b$ and $c$ are actions in $\cal A$ (${\calB_i} \subseteq {\cal A}$); and
\item $P_{ij}$ and $Q$ are predicates in $\calP$ (${\calQ_i} \subseteq {\calP}$).
\end{enumerate}

\item A \emph{{\preg} predicate rule} for an $l$-ary operation $f$ is a deduction rule similar to the one above, with the only difference that its conclusion has the form $P(f(x_1, \ldots, x_l))$ for some $P \in \calP$.

\end{enumerate}
\end{definition}

Let $\rho$ be a {\preg} (transition or predicate) rule for $f$.
The symbol $f$ is the \emph{principal operation} of $\rho$. All the formulas above the line are \emph{antecedents} and the formula below is the \emph{consequent}. We say that a position $i$ for $\rho$ is \emph{tested positively} if $i \in \act  \cup \sat$ and $\actpar{i} \cup \satpar{i} \not= \emptyset$. Similarly, $i$ is \emph{tested negatively} if $i \in \nact  \cup \nsat$ and $\calB_i \cup \calQ_i \not= \emptyset$.
Whenever $\rho$ is a transition rule for $f$, we say that
$f(\vec{x})$ is the \emph{source}, $C[\vec{x}, \vec{y}]$ is the \emph{target}, and $c$ is the \emph{action} of $\rho$.
Whenever $\rho$ is a predicate rule for $f$, we call $f(\vec{x})$ the \emph{test} of $\rho$.

In order to avoid confusion, if in a certain context we use more than one rule, e.g. $\rho, \rho'$, we parameterize the corresponding sets of indices with the name of the rule, {\it e.g.},
$\actpar{\rho}$, $\nsatpar{\rho'}$.

\begin{definition}[{\preg} system]
\label{def:apregSys}
A {\preg} system is a pair $G = (\Sigma_G, \setrulesG)$, where $\Sigma_G$ is a finite signature and $\setrulesG = \setrulestranzG \cup \setrulessatG$ is a finite set of {\preg} rules over $\Sigma_G$
({\setrulestranzG} and {\setrulessatG} represent the transition and, respectively, the predicate rules of $G$).
\end{definition}


Consider a {\preg} system $G$.
Formally, the operational semantics of the closed process terms in $G$ is fully characterized by the relations
$\sstrans{G} \subseteq T(\Sigma_G) \times {\calA} \times T(\Sigma_G)$ and $\sssat{G} \subseteq \calP \times T(\Sigma_G)$, called the (unique) \emph{sound and supported} transition and, respectively, predicate relations.
Intuitively, soundness guarantees that $\sstrans{G}$ and $\sssat{G}$ are closed with respect to the application of the rules in $\setrules_G$ on $T(\Sigma_G)$, {\it i.e.}, $\sstrans{G}$ (resp. $\sssat{G}$) contains the set of all possible transitions (resp. predicates) process terms in $T(\Sigma_G)$ can perform (resp. satisfy) according to $\setrules_G$.
The requirement that $\sstrans{G}$ and $\sssat{G}$ be supported means that all the transitions performed (resp. all the predicates satisfied) by a certain process term can be ``derived" from the deductive system described by $\setrules_G$.
As a notational convention, we write $S \xrightarrow{a}_{G} S'$ and $P_{G} S$ whenever $(S,a,S') \in\, \rightarrow_{G}$ and $(P,S) \in \sssat{G}$. We omit the subscript $G$ when it is clear from the context.


\begin{lemma}
\label{lm:fin-bran}
Let $G$ be a {\preg} system. Then, for each $t \in T(\Sigma_G)$ the set $\{(a, t') \mid t \xrightarrow{a} t',\, a\in \calA\}$ is finite.
\end{lemma}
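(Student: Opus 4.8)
The plan is to argue by structural induction on the closed term $t \in T(\Sigma_G)$, exploiting the defining shape of {\preg} transition rules: in every rule for an operation $f$ the antecedents test only the argument variables $x_1, \ldots, x_l$, and the sole fresh variables are the targets $y_{ij}$ of the positive transition premises. Since the target of such a rule is a context $C[\vec{x}, \vec{y}]$ over exactly these variables, a transition of $t$ is completely determined once we fix the rule that justifies it and the witnesses chosen for its positive transition premises.

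For the base case, let $t$ be a constant $f$, an operation of arity $0$. Then $f$ has no arguments, so any transition rule for $f$ has no antecedents mentioning arguments and a conclusion $f \xrightarrow{c} C[\,]$ with a fixed closed target. As $\setrulesG$ is finite, there are only finitely many such rules, and hence finitely many outgoing transitions.

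For the inductive step, let $t = f(t_1, \ldots, t_l)$ and assume each $t_i$ satisfies the claim. Because $\sstrans{G}$ is supported, every transition $t \xrightarrow{c} t'$ is the conclusion of a closed instance of some transition rule $\rho$ for $f$ under a substitution $\sigma$ with $\sigma(x_i) = t_i$. Writing $u_{ij} = \sigma(y_{ij})$, its positive transition antecedents require $t_i \xrightarrow{a_{ij}} u_{ij}$ for $i \in \act$ and $j \in \actpar{i}$; by the induction hypothesis each $t_i$ offers only finitely many such witnesses $u_{ij}$. The remaining antecedents, namely the negative transition premises together with the positive and negative predicate premises, bind no new variables and merely act as side conditions determining whether $\rho$ may fire; they therefore do not enlarge the set of possible targets. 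Consequently both $c$ and $t' = C[\vec{t}, \vec{u}]$ are fixed by the choice of $\rho$ and of the witnesses for its positive transition premises.

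Finiteness now follows by a counting argument: $\setrulesG$ contains finitely many rules for $f$; each such rule has finitely many positive transition premises, since $\act \subseteq \{1,\ldots,l\}$ is finite and every $\actpar{i}$ is finite; and, by the induction hypothesis, each premise can be witnessed in only finitely many ways. A finite union (over the rules for $f$) of finite products (over the premises of a rule) of finite sets (of witnesses) is finite, so $\{(a,t') \mid t \xrightarrow{a} t',\, a \in \calA\}$ is finite. The only genuinely delicate point is the appeal to supportedness, which is what lets us reduce a transition of $t$ to transitions of its immediate subterms $t_i$; this is precisely guaranteed by the {\preg} format, in which all antecedents are formulas over the argument variables $\vec{x}$.
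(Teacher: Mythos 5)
Your proof is correct and is essentially the argument the paper intends: the paper states this lemma without an explicit proof, and the machinery it rests on (the inductive construction of the unique sound and supported transition and predicate relations) proceeds by exactly the same structural induction on closed terms. Your two key ingredients---supportedness, used to invert every transition of $f(t_1,\ldots,t_l)$ into an application of one of the finitely many rules for $f$ with $\sigma(x_i)=t_i$ forced, and the finiteness of each $\actpar{i}$ combined with the induction hypothesis to bound the choices of witnesses for the positive transition premises---are precisely what that construction provides, so there is no gap and no divergence from the paper's route.
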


Next we introduce the notion of \emph{bisimilarity} -- the equivalence over processes we consider in this paper.

\begin{definition}[Bisimulation]
\label{def:bisimulation}
Consider a {\preg} system $G=(\Sigma_G, \setrulesG)$.
A symmetric relation $R\, \subseteq T(\Sigma_G) \times T(\Sigma_G)$ is a \emph{bisimulation} iff:
\begin{enumerate}\itemsep1pt
\item for all $s, t, s' \in T(\Sigma_G)$, whenever $(s, t) \in\, R$ and $s \xrightarrow{a} s'$ for some $a \in {\cal A}$, then there is some $t' \in T(\Sigma_G)$ such that $t \xrightarrow{a} t'$ and $(s',t') \in\, R$;\label{def:bisim1}
\item whenever $(s,t) \in\, R$ and $Ps$ $(P \in \calP)$ then $Pt$. \label{def:bisim2}
\end{enumerate}

Two closed terms $s$ and $t$ are \emph{bisimilar} (written $s \sim t$) iff there is a bisimulation relation $R$ such that $(s,t) \in\, R$.
\end{definition}

\begin{proposition}
\label{bis:equiv-congr}
Let $G$ be a {\preg} system. Then $\sim$ is an equivalence relation and a congruence for all operations $f$ of $G$.
\end{proposition}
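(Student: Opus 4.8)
The plan is to establish the two assertions separately, dispatching the equivalence properties quickly and then concentrating on the congruence property, which is the substantial part.

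\emph{Equivalence.} For reflexivity I would verify that the identity relation $\{(t,t) \mid t \in T(\Sigma_G)\}$ satisfies both clauses of Definition~\ref{def:bisimulation} trivially, so that $t \sim t$ for every $t$. Symmetry is immediate, since by Definition~\ref{def:bisimulation} every witnessing relation is already required to be symmetric, hence so is their union $\sim$. For transitivity, given bisimulations $R_1 \ni (s,t)$ and $R_2 \ni (t,u)$, I would check that the symmetric closure of the relational composition $R_1 \circ R_2$ is again a bisimulation: the two transfer conditions compose in the obvious way, and symmetry is recovered because $R_1$ and $R_2$ are themselves symmetric. This yields $s \sim u$.

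\emph{Congruence.} The key idea is to work with the least relation $R$ that contains $\sim$ and is closed under all operations of $G$, i.e. the \emph{congruence closure} of $\sim$; concretely, $(p,q) \in R$ holds exactly when either $p \sim q$, or $p = f(\vec{p})$ and $q = f(\vec{q})$ with $(p_i,q_i) \in R$ for every $i$. This relation is symmetric because $\sim$ is, and an easy induction on contexts shows it is closed under arbitrary $\Sigma_G$-contexts. I would then prove that $R$ is itself a bisimulation. Once this is done, $R \subseteq \mathord{\sim}$, and since $s_i \sim t_i$ for $i \in \all$ gives $(f(\vec{s}),f(\vec{t})) \in R$ by construction, we obtain $f(\vec{s}) \sim f(\vec{t})$, which is precisely the congruence property.

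To show that $R$ is a bisimulation I would argue by induction on the inductive definition of $R$ (equivalently, on the approximants $R_0 = \mathord{\sim} \subseteq R_1 \subseteq \cdots$ with $R = \bigcup_n R_n$). The base case $p \sim q$ is handled by $\sim$ being a bisimulation together with $\mathord{\sim} \subseteq R$. In the inductive step $p = f(\vec{p})$, $q = f(\vec{q})$ with $(p_i,q_i) \in R$, suppose $f(\vec{p}) \xrightarrow{c} u$. By supportedness of \sstrans{G} this transition is the conclusion of an instance of some {\preg} transition rule $\rho$ for $f$, so $u = C[\vec{p},\vec{u}]$ and the antecedents of $\rho$ hold of the arguments $p_i$: positive premises $p_i \xrightarrow{a_{ij}} u_{ij}$ $(i \in \act,\, j \in \actpar{i})$ and $P_{ij}\,p_i$ $(i \in \sat,\, j \in \satpar{i})$, together with negative premises $p_i \notranz{b}$ $(i \in \nact,\, b \in \calB_i)$ and $\neg Q\,p_i$ $(i \in \nsat,\, Q \in \calQ_i)$. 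For the positive antecedents the induction hypothesis supplies matching moves $q_i \xrightarrow{a_{ij}} u'_{ij}$ with $(u_{ij},u'_{ij}) \in R$ and matching satisfactions $P_{ij}\,q_i$. Assembling these witnesses and re-applying $\rho$ to $f(\vec{q})$ (using soundness of \sstrans{G}) yields $f(\vec{q}) \xrightarrow{c} C[\vec{q},\vec{u'}]$, and the target pair $(C[\vec{p},\vec{u}], C[\vec{q},\vec{u'}])$ lies in $R$ because $R$ is closed under the context $C$ and all corresponding components are $R$-related. The predicate-transfer clause and the case of {\preg} predicate rules are treated in exactly the same manner.

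The main obstacle is the treatment of the \emph{negative} antecedents $p_i \notranz{b}$ and $\neg Q\,p_i$: to re-apply $\rho$ to $f(\vec{q})$ I must guarantee $q_i \notranz{b}$ and $\neg Q\,q_i$. Here I would exploit the symmetry of $R$ and the induction hypothesis in contrapositive form: if $q_i$ could perform $b$, then the transfer property for $(q_i,p_i) \in R$ would force $p_i$ to perform $b$, contradicting $p_i \notranz{b}$, and analogously for $\neg Q$. This is the only point where both directions of the bisimulation condition are needed at once, and it is precisely why the congruence closure must be set up symmetrically from the outset. By contrast, the finite-branching property of Lemma~\ref{lm:fin-bran} plays no role in this argument.
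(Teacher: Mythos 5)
Your proof is correct. The paper states Proposition~\ref{bis:equiv-congr} without any proof, treating it as the routine extension to predicates of the standard GSOS congruence theorem of Bloom, Istrail and Meyer; your argument --- taking the congruence closure of $\sim$, inducting on its approximants, and using the symmetry of each approximant to discharge the negative antecedents by contraposition --- is precisely that standard argument, and you correctly isolate the treatment of negative premises (and the consequent need for symmetry at every stage) as the one delicate step, as well as correctly observing that soundness/supportedness of $\sstrans{G}$ and $\sssat{G}$, not finite branching, is what the argument rests on.
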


\begin{definition}[Disjoint extension]
\label{def:disjExt}
A {\preg} system $G'$ is a disjoint extension of a {\preg} system $G$, written $G \sqsubseteq G'$, if the signature and the rules of $G'$ include those of $G$, and $G'$ does not introduce new rules for operations in $G$.
\end{definition}

It is well known that if $G \sqsubseteq G'$ then two terms in $T(\Sigma_G)$ are bisimilar in $G$ if and only if they are bisimilar in $G'$.

{
From this point onward, our focus is to find a \emph{sound and ground-complete axiomatization of bisimilarity on closed terms} for an arbitrary {\preg} system $G$, {\it i.e.}, to identify a (finite) axiom system $\ET_G$ so that
$
\ET_G \vdash s = t\,\, \it{iff}\,\, s \sim t \text{ for all } s, t \in T$$(\Sigma_G).
$
}
The method we apply is an adaptation of the technique in~\cite{Aceto:1994:TSR:184662.184663} to the {\preg} setting.
The strategy is to incrementally build a finite, head-normalizing axiomatization  for general {\preg} terms, {\it{i.e.}}, an axiomatization that, when applied recursively, reduces the completeness problem for arbitrary terms to that for synchronization trees. This way, the proof of ground-completeness for $G$ reduces to showing the equality of closed tree terms.

\section{Preliminary steps towards the axiomatization}
\label{sec:fintree}

In this section we start by identifying an appropriate language for expressing finite trees with predicates. We continue in the style of \cite{Aceto:1994:TSR:184662.184663}, by extending the language with a kind of restriction operator used for expressing the inability of a process to perform a certain action or to satisfy a given predicate. (This operator is used in the axiomatization of negative premises.) We provide the structural operational semantics of the resulting language, together with a sound and ground-complete axiomatization of bisimilarity on finite trees with predicates.

\subsection{Finite trees with predicates}

The language for trees we use in this paper is an extension with predicates of the language BCCSP~\cite{Glabbeek01}. The syntax of BCCSP consists of closed terms built from 
a constant $\delta$ (\emph{deadlock}), the binary operator $\_\hspace{-1.5pt}+\hspace{-2.2pt}\_$\, (\emph{nondeterministic choice}), and the unary operators $a.\_$ (\emph{action prefix}), where $a$ ranges over the actions in a set {\calA}. Let {\calP} be a set of predicates. For each $P \in \calP$ we consider a process constant $\ct_{P}$, which ``witnesses'' the associated predicate in the definition of a process. Intuitively, $\ct_P$ stands for a process that only satisfies predicate $P$ and has no transition.

A finite tree term $t$ is built according to the following grammar:
\begin{equation}
\label{treesgrammar}
t {\,\,::=\,\,} \delta \mid \ct_P ~ (\forall P \in \calP) \mid
           a.t ~ (\forall a \in \calA) \mid t + t.
\end{equation}


Intuitively, $\delta$ represents a process that does not exhibit any behaviour, $s + t$ is the nondeterministic choice between the behaviours of $s$ and $t$, while $a.t$ is a process that first performs action $a$ and behaves like $t$ afterwards. The operational semantics that captures this intuition is given by the rules of BCCSP:

\begin{figure}[H]
\begin{center}
\begin{tabular}{c@{\hspace{6ex}}c@{\hspace{6ex}}c}
$\dfrac{}{a.x \xrightarrow{a} x}$ $(rl_\rlAxpref)$ &
$\dfrac{x \xrightarrow{a} x'}{x + y \xrightarrow{a} x'}$ $(rl_\rlsumLpref)$ &
$\dfrac{y \xrightarrow{a} y'}{x + y \xrightarrow{a} y'}$ $(rl_\rlsumRpref)$
\end{tabular}
\caption{The semantics of BCCSP}
\label{fig:BCCSP}
\end{center}
\end{figure}

As our goal is to extend BCCSP, the next step is to find an appropriate semantics for predicates. As can be seen in Fig.~\ref{fig:BCCSP}, action performance is determined by the shape of the terms.
Consequently, we choose to define predicates in a similar fashion.

Consider a predicate $P$ and the term $t=\ct_P$. As previously mentioned, the purpose of $\ct_P$ is to witness the satisfiability of $P$. Therefore, it is natural to consider that $\ct_P$ satisfies $P$.

Take for example the \emph{immediate termination} predicate $\downarrow$. As a term $s+s'$ exhibits the behaviour of both $s$ and $s'$, it is reasonable to state that $(s + s')\downarrow$ if $s\downarrow$ or $s'\downarrow$. Note that for a term $t=a.t'$ the statement $t\downarrow$ is in contradiction with the meaning of immediate termination, since $t$ can initially only execute action $a$. Predicates of this kind are called \emph{explicit predicates} in what follows.

Consider now the \emph{eventual termination} predicate \lightning. In this situation, it is proper to consider that $(s + t)\textnormal{\lightning}$ if $s\textnormal{\lightning}$ or $t\textnormal{\lightning}$ and, moreover, that
$a.s\textnormal{\lightning}$ if $s\textnormal{\lightning}$. We refer to predicates such as $\textnormal{\lightning}$ as \emph{implicit predicates} (that range over a set {\pimp} included in \calP), since their satisfiability propagates through the structure of tree terms in an implicit fashion. We denote by $\calA_{P}$ (included in $\calA$) the set consisting of the actions $a$ for which this behaviour is permitted when reasoning on the satisfiability of predicate $P$.

The rules expressing the semantics of predicates are:

\begin{figure}[H]
\begin{center}
\begin{tabular}{c@{\hspace{3ex}}c@{\hspace{3ex}}c@{\hspace{3ex}}c}
$\dfrac{}{P\ct_P}$ $(rl_\rlAxcp)$ & $\dfrac{Px}{P(x + y)}$ $(rl_\rlPL)$ & $\dfrac{Py}{P(x+y)}$ $(rl_\rlPR)$ &  $\dfrac{Px}{P(a.x)}, \forall P \in \pimp \,\, \forall a \in \calA_{P} ~(rl_\rlPpref)$ \\[4ex]
\end{tabular}
\caption{The semantics of predicates}
\label{fig:TTSz}
\end{center}
\end{figure}

The operational semantics of trees with predicates is given by the set of rules ($rl_\rlAxpref$)--($rl_\rlPpref$) illustrated in Fig.~\ref{fig:BCCSP} and Fig.~\ref{fig:TTSz}.
For notational consistency, we make the following conventions.
Let $\calA$ be an action set and $\calP$ a set of predicates.
$\Sigma_{\FINTREEPRED}$ represents the signature of finite trees with predicates.
$T(\Sigma_{\FINTREEPRED})$ is the set of (closed) tree terms built over $\Sigma_{\FINTREEPRED}$, and
${\setrules}_{\FINTREEPRED}$ is the set of rules ($rl_\rlAxpref$)--($rl_\rlPpref$). Moreover, by $\FINTREEPRED$ we denote the system $(\Sigma_{\FINTREEPRED}, {\setrules}_{\FINTREEPRED})$.

\paragraph{Discussion on the design decisions.}
{
At first sight, it seems reasonable for our framework to allow for language specifications containing rules of the shape $\frac{}{P(x + y)}$, or just one of ($rl_\rlPL$) and ($rl_\rlPR$). We decided, however, to disallow them, as their presence would invalidate standard algebraic properties such as the idempotence and the commutativity of $\_\hspace{-1.5pt}+\hspace{-2pt}\_\,$.

Without loss of generality we avoid rules of the form $ \frac{}{P(a.x)} $. As far as the user is concerned, in order to express that $a.x$ satisfies a predicate $P$, one can always add the witness $\ct_P$ as a summand: $a.x + \ct_P$. This decision helped us avoid some technical problems for the soundness and completeness proofs for the case of the restriction operator $\onedag$, which is presented in Section~\ref{subsec:dagger}. 

Due to the aforementioned restriction, we also had to leave out universal predicates with rules of the form $\frac{P x ~ P y}{P(x + y)}$. However, the elimination of universal predicates is not a theoretical limitation to what one can express, since a universal predicate can always be defined as the negation of an existential one.

As a last approach, we thought of allowing the user to specify existential predicates using rules of the form $\frac{P_1 x \ldots P_n x}{P(x+y)} (*)$ and $\frac{P_1 y \ldots P_n y}{P(x+y)} (**)$ (instead of $(rl_\rlPL)$ and $(rl_\rlPR)$). However, in order to maintain the validity of the axiom $x + x = x$ in the presence of rules of these forms, it would have to be the case that one of the predicates $P_i$ in the premises is $P$ itself. (If that were not the case, then let $t$ be the sum of the constants witnessing the $P_i$'s  for a rule of the form $(*)$ above with a minimal set of set premises. We have that $t + t$ satisfies $P$ by rule $(*)$. On the other hand, $P t$ does not hold since none of the $P_i$ is equal to $P$ and no rule for $P$ with a smaller set of premises exists.) Now, if a rule of the form $(*)$ has a premise of the form $P x$, then it is subsumed by $(rl_{\rlPL})$ which we must have to ensure the validity of laws such as $\ct_P = \ct_P + \ct_P$.

}


\subsection{Axiomatizing finite trees}
\label{sec:axFinTrees}

In what follows we provide a finite sound and ground-complete axiomatization ({\ETz}) for bisimilarity over finite trees with predicates.

The axiom system {\ETz} consists of the
following axioms:

\begin{figure}[H]
\begin{center}
\begin{tabular}{r@{\hspace{3pt}}c@{\hspace{3pt}}lr@{\hspace{20pt}}r@{\hspace{3pt}}c@{\hspace{3pt}}lr}
$x + y$ & = & $y + x$ & $(A_\eqcomm)$ &
$x + x$ & = & $x$ & $(A_\eqidem)$
\\[1ex]
$(x + y) + z$ & = & $x + (y + z)$ & $(A_\eqassoc)$ &
$x + \delta$ & = & $x$ & $(A_\equnit)$
\\[1ex]
\multicolumn{8}{c}{{$a.(x+\ct_P)$  =  $a.(x+\ct_P) + \ct_P, \forall P \in \pimp \,\, \forall a \in \calA_{P}$  $(A_\eqcp)$}}
\end{tabular}
\caption{The axiom system \ETz}
\label{fig:ETz}
\end{center}
\end{figure}

Axioms $(A_\eqcomm)$--$(A_\equnit)$ are well-known %
\cite{Mi89}.
Axiom $(A_\eqcp)$ describes the propagation of witness constants for the case of implicit predicates.

We now introduce the notion of terms in \emph{head normal form}. This concept plays a key role in the proofs of completeness for the axiom systems generated by our framework.

\begin{definition}[{Head} Normal Form]
Let $\Sigma$ be a signature such that $\Sigma_{\FINTREEPRED} \subseteq \Sigma$.
A term $t$ in $T(\Sigma)$ is in \emph{head normal form} (for short, h.n.f.) if
\label{def:hnf}
~\\
\[
t = {\sum_{i \in I}a_i.t_i + \sum_{j \in J}\ct_{P_j}}, \text{and the $P_j$ are all the predicates satisfied by $t$}.
\]

The empty sum $(I = \emptyset, J = \emptyset)$ is denoted by the deadlock constant $\delta$.

\end{definition}

\begin{lemma}
\label{lem:hnf_exists}
{\ETz} is head normalizing for terms in $T(\Sigma_{\FINTREEPRED})$.
That is, for all $t$ in $T(\Sigma_{\FINTREEPRED})$, there exists $t'$ in $T(\Sigma_{\FINTREEPRED})$ in h.n.f. such that $\ETz \vdash t = t'$ holds.
\end{lemma}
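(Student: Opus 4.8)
The plan is to argue by structural induction on $t$, following the grammar~\eqref{treesgrammar} for finite tree terms, and to produce at each stage a term $t'$ of the required shape for which the witness summands $\ct_{P_j}$ coincide with the predicates $t'$ actually satisfies. This is all that is needed, since the head-normal-form condition constrains only the top-level shape of $t'$ and is a property of $t'$ alone (the subterms $t_i$ under the action prefixes may be arbitrary). For the base cases: if $t=\delta$, then no predicate rule of Fig.~\ref{fig:TTSz} has $\delta$ as its source, so $\delta$ satisfies no predicate and is already the empty-sum h.n.f.; if $t=\ct_P$, then only $(rl_\rlAxcp)$ fires, so $\ct_P$ satisfies exactly $P$, and $\ct_P$ is in h.n.f.\ with no action summand and the single witness $\ct_P$.

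For the choice case $t=s_1+s_2$, the induction hypothesis supplies head normal forms $s_1',s_2'$ with $\ETz\vdash s_i=s_i'$; since provable equality is closed under $\Sigma_{\FINTREEPRED}$-contexts, $\ETz\vdash s_1+s_2=s_1'+s_2'$. Using $(A_\eqcomm)$, $(A_\eqassoc)$ and $(A_\eqidem)$ I rearrange $s_1'+s_2'$ into a single sum of action prefixes and witness constants, deleting repeated witnesses. The witness summands of the result are the union of those of $s_1'$ and $s_2'$; because $s_1',s_2'$ are in h.n.f.\ and by rules $(rl_\rlPL)$ and $(rl_\rlPR)$, this union is exactly the set of predicates satisfied by the combined term, so it is in h.n.f.

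The action-prefix case $t=a.s$ is the core of the argument. The induction hypothesis gives a head normal form $s'$ with $\ETz\vdash s=s'$, whence $\ETz\vdash a.s=a.s'$ by closure under contexts. A term $a.s'$ can satisfy a predicate $P$ only through $(rl_\rlPpref)$, that is, only when $P\in\pimp$, $a\in\calA_P$ and $s'$ satisfies $P$; write $J$ for this set. Since $s'$ is in h.n.f., $s'$ satisfies $P$ if and only if $\ct_P$ occurs as a top-level summand of $s'$, so for each $P\in J$ I may write $s'$ as $u+\ct_P$ (up to $(A_\eqcomm)$ and $(A_\eqassoc)$) and apply $(A_\eqcp)$ to obtain $\ETz\vdash a.s'=a.(u+\ct_P)=a.(u+\ct_P)+\ct_P=a.s'+\ct_P$. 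Performing this once for every $P\in J$ (and using $(A_\eqidem)$ to absorb duplicates) yields $\ETz\vdash a.s=a.s'+\sum_{P\in J}\ct_P$, a term in h.n.f.\ whose witness summands are exactly $J$, which is precisely the set of predicates it satisfies.

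The main obstacle is this last case: one must verify that $(A_\eqcp)$ surfaces neither too few nor too many witnesses. This hinges on the exact agreement between the side condition $P\in\pimp$, $a\in\calA_P$ of $(A_\eqcp)$ and the firing condition of the predicate rule $(rl_\rlPpref)$, together with the fact that a head normal form exposes each satisfied implicit predicate as a literal $\ct_P$ summand, so that $(A_\eqcp)$ is applicable exactly when it is required and never otherwise. The remaining cases rely only on the standard BCCSP axioms $(A_\eqcomm)$--$(A_\equnit)$ and are routine.
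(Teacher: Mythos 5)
Your proof is correct and follows essentially the same route as the paper's: structural induction on $t$ with the same four cases, using congruence plus $(A_\eqcomm)$--$(A_\equnit)$ for the sum case and the instances of $(A_\eqcp)$ to surface exactly the witnesses $\ct_P$ with $P \in \pimp$, $a \in \calA_P$ and $P$ satisfied by the subterm in the prefix case. Your treatment is, if anything, slightly more explicit than the paper's about why the resulting witness set matches the satisfied predicates exactly.
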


\begin{proof}
{The reasoning is by induction on the structure of $t$.}
\end{proof}

\begin{theorem}
\label{thm:soundness_completeness}
$\ETz$ is sound and ground-complete for bisimilarity on $T(\Sigma_{\FINTREEPRED})$. That is, it holds that
$(\forall t, t' \in T(\Sigma_{\FINTREEPRED}))\,.\, \ETz \vdash t = t' \text{ iff } t \sim t.
$ 
\end{theorem}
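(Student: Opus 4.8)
The plan is to prove soundness and ground-completeness separately, with completeness resting on the head-normalization result of Lemma~\ref{lem:hnf_exists}.

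For \textbf{soundness}, I would verify that each axiom in $\ETz$ relates bisimilar terms; since $\sim$ is a congruence (Proposition~\ref{bis:equiv-congr}), this suffices to conclude that $\ETz \vdash t = t'$ implies $t \sim t'$. The axioms $(A_\eqcomm)$--$(A_\equnit)$ are the standard laws for nondeterministic choice and deadlock, whose soundness is routine: one exhibits the obvious symmetric relations (e.g. $\{(x+y, y+x)\}$ closed under $\sim$) and checks the transition and predicate transfer conditions of Definition~\ref{def:bisimulation}, using the BCCSP rules in Fig.~\ref{fig:BCCSP} and the predicate rules in Fig.~\ref{fig:TTSz}. The only genuinely new case is $(A_\eqcp)$, namely $a.(x+\ct_P) = a.(x+\ct_P) + \ct_P$ for $P \in \pimp$ and $a \in \calA_P$. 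Here the key observation is that, by rule $(rl_\rlPpref)$, the left-hand side already satisfies $P$: since $P$ is implicit and $a \in \calA_P$, from $P(x+\ct_P)$ (which holds by $(rl_\rlAxcp)$ and $(rl_\rlPR)$) we derive $P(a.(x+\ct_P))$. Hence adding the summand $\ct_P$ introduces no new predicate satisfaction, and both sides have identical outgoing $a$-transitions, so the identity relation extended by this pair is a bisimulation.

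For \textbf{ground-completeness}, the strategy is the classical one: reduce to head normal forms and induct on their size. Suppose $t \sim t'$ with $t, t' \in T(\Sigma_{\FINTREEPRED})$. By Lemma~\ref{lem:hnf_exists} we may assume, using $\ETz$ and soundness, that $t$ and $t'$ are in head normal form, say $t = \sum_{i \in I} a_i.t_i + \sum_{j \in J} \ct_{P_j}$ and $t' = \sum_{k \in K} a'_k.t'_k + \sum_{m \in M} \ct_{P'_m}$. I would proceed by induction on the combined depth (or total number of symbols) of $t$ and $t'$. The predicate part matches exactly: by the h.n.f.\ convention the $P_j$ are precisely the predicates satisfied by $t$, and likewise for $t'$; since $t \sim t'$, condition~\ref{def:bisim2} of Definition~\ref{def:bisimulation} (applied symmetrically) forces the two sets of satisfied predicates to coincide, so the witness summands agree up to the laws $(A_\eqcomm)$, $(A_\eqassoc)$, $(A_\eqidem)$. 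For the transition part, each summand $a_i.t_i$ of $t$ yields a transition $t \xrightarrow{a_i} t_i$; by the bisimulation condition~\ref{def:bisim1} there is a matching $t' \xrightarrow{a_i} t''$ with $t_i \sim t''$, and since $t'$ is in h.n.f.\ this $t''$ is (bisimilar to) one of the $t'_k$. The induction hypothesis gives $\ETz \vdash t_i = t''$, and then the standard absorption argument using idempotence, commutativity and associativity of $+$ lets me show that every summand of $t$ is provably a summand of $t'$ and vice versa, whence $\ETz \vdash t = t'$.

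The main obstacle I anticipate is the bookkeeping around the predicate summands and their interaction with action prefixes, rather than any deep difficulty. Specifically, one must be careful that the h.n.f.\ faithfully records \emph{all} satisfied predicates, including those satisfied \emph{implicitly} through an action prefix: if some $a_i \in \calA_P$ for an implicit predicate $P$ and $t_i$ satisfies $P$, then $t$ itself satisfies $P$ and the h.n.f.\ must contain $\ct_P$ as a summand. Axiom $(A_\eqcp)$ is exactly what makes this information explicit at the top level, and the proof of Lemma~\ref{lem:hnf_exists} presumably relies on it; in the completeness argument I must ensure the matching of predicate sets and the matching of transition targets are done consistently, so that an implicit predicate witnessed below a prefix on one side is accounted for on the other. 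Once the invariant ``the $\ct_P$ summands of an h.n.f.\ enumerate exactly the satisfied predicates'' is maintained under the induction, the completeness argument closes routinely.
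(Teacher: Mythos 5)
Your proposal is correct and follows essentially the same route as the paper: soundness by checking each axiom (with $(A_\eqcp)$ justified via rule $(rl_\rlPpref)$), and completeness by reducing both terms to head normal form via Lemma~\ref{lem:hnf_exists}, inducting on their depth, and matching action summands through the bisimulation clause and predicate witnesses through the h.n.f.\ convention, concluding by absorption ($t = s + t$ and $s = t + s$). The paper's proof likewise observes that $(A_\eqcp)$ plays no explicit role in the completeness induction precisely because the terms are already in h.n.f., which matches your closing remark about the invariant on $\ct_P$ summands.
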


\subsection{Axiomatizing negative premises}
\label{subsec:dagger}

A crucial step in finding a complete axiomatization for {\preg} systems is the ``axiomatization'' of negative premises (of the shape $x \notranz{a},\, \neg P x$). In the style of \cite{Aceto:1994:TSR:184662.184663}, we introduce the restriction operator $\onedag$, where $\calB \subseteq \calA$ and $\calQ \subseteq \calP$ are the sets of initially forbidden actions and predicates, respectively. The semantics of $\onedag$ is given by the two types of transition rules in Fig.~\ref{fig:TSSdag}.

\begin{figure}[H]
\begin{center}
\begin{tabular}{c@{~~~}c}
$\dfrac{x \xrightarrow{a} x'}{\onedag(x) \xrightarrow{a} {\onedagpar{\emptyset, \calQ \cap \pimp }(x')}} ~ \text{ if } a \not \in \calB ~ (rl_{\rlanin})$
&
$\dfrac{Px}{P(\onedag(x))} ~ \text{ if } P \not \in \calQ ~ (rl_{\rlpnin})$ 
\end{tabular}
\caption{The semantics of $\onedag$}
\label{fig:TSSdag}
\end{center}
\end{figure}

Note that $\onedag$ behaves like the one step restriction operator in \cite{Aceto:1994:TSR:184662.184663} for the actions in $\cal B$, as the restriction on the action set disappears after one transition. On the other hand, for the case of predicates in $\cal Q$, the operator $\onedag$ resembles the CCS restriction operator~\cite{Mi89} since, due to the presence of implicit predicates, not all the restrictions related to predicate satisfaction necessarily disappear after one step,
as will become clear in what follows.

We write $\ET_{\FINTREEPRED}^{\partial}$ for the extension of $\ET_{\FINTREEPRED}$ with the axioms involving $\onedag$ presented in Fig.~\ref{fig:ETdag}. ${\setrules}_{\FINTREEPRED}^{\partial}$ stands for the set of rules $(rl_\rlAxpref)\!\!-\!\!(rl_{\rlpnin})$, while $\FINTREEPRED^{\partial}$ represents the system $(\Sigma_{\FINTREEPRED}^{\partial}, {\setrules}_{\FINTREEPRED}^{\partial})$.

\begin{figure}[H]
\begin{center}

\begin{tabular}{r@{\hspace{3pt}}c@{\hspace{3pt}}lll@{\hspace{20pt}}r@{\hspace{3pt}}c@{\hspace{3pt}}lll}
$\onedag(\delta)$ & = & $\delta$ & & $(A_{\eqdagdlt})$ & $\onedag(a.x)$ & = & $\sum_{{P \not \in \calQ, P(a.x)}}  \ct_P$ & if $a \in \calB$ & $(A_{\eqdagain})$\\[1ex]
$\onedag(\ct_P)$ & = & $\delta$  & if $P \in \calQ$ & $(A_{\eqdagcpin})$ & $\onedag(a.x)$ & = & $\onedagnact(a.x)$ & if $a \not \in \calB$ & $(A_{\eqdaganin})$\\[1ex]
$\onedag(\ct_P)$ & = & $\ct_P$  & if $P \not \in  \calQ$ & $(A_{\eqdagcpnin})$ & $\onedagnact(a.x)$ & = & 
{$a.\onedagpar{\emptyset, \calQ \cap \pimp }(x)$} &
& $(A_{\eqdagrec})$\\[1ex]
\multicolumn{10}{c}{{$\onedag(x+y)$ = $\onedag(x) + \onedag(y)$ $(A_{\eqdagsum})$}}
\end{tabular}
\caption{The axiom system $\ET_{\FINTREEPRED}^{\partial} \setminus \ETz$}
\label{fig:ETdag}

\end{center}
\end{figure}

Axiom $(A_{\eqdagdlt})$ states that it is useless to impose restrictions on $\delta$, as $\delta$ does not exhibit any behaviour.
The intuition behind $(A_{\eqdagcpin})$ is that since a predicate witness $\ct_P$ does not perform any action, inhibiting the satisfiability of $P$ leads to a process with no behaviour, namely $\delta$. Consequently, if the restricted predicates do not include $P$, the resulting process is $\ct_P$ itself (see $(A_{\eqdagcpnin})$).
Inhibiting the only action a process $a.t$ can perform leads to a new process that, in the best case, satisfies some of the predicates in $\pimp$ satisfied by $t$ (by $(rl_\rlPpref)$) if $\calQ \not= \pimp$ (see $(A_{\eqdagain})$).
Whenever the restricted action set $\calB$ does not contain the only action a process $a.t$ can perform, then it is safe to give up $\calB$ (see $(A_{\eqdaganin})$).
As a process $a.t$ only satisfies the predicates also satisfied by $t$, it is straightforward to see that $\onedagpar{\emptyset, \calQ}(a.t)$ is equivalent to the process obtained by propagating the restrictions on implicit predicates deeper into the behaviour of $t$ (see $(A_{\eqdagrec})$).
Axiom $(A_{\eqdagsum})$ is given in conformity with the semantics of 
$\_\hspace{-1.5pt}+\hspace{-2pt}\_\,$ ($s+t$ encapsulates both the behaviours of $s$ and $t$).

\begin{remark}
For the sake of brevity and readability, in Fig.~\ref{fig:ETdag} we presented $(A_{\eqdagain})$, which is a schema with infinitely many instances. However, it can be replaced by a finite family of axioms. See Appendix~D in the full version of the paper available at \textnormal{\footnotesize{\url{http://www.ru.is/faculty/luca/PAPERS/axgsos.pdf}}} for details.
\end{remark}

\begin{theorem}
\label{thm:soundness_hnf_dagger}
The following statements hold for $\ETdag$:
\begin{enumerate}\itemsep1pt
\item $\ETdag$ is sound for bisimilarity on $T(\Sigma_{\FINTREEPRED}^{\partial})$. \label{stmt:sound}


\item $\forall t \in T(\Sigma_{\FINTREEPRED}^{\partial}), \exists t' \in T(\Sigma_{\FINTREEPRED}) ~s.t.~ \ETdag \vdash t = t'$. \label{stmt:hnf}
\end{enumerate}

\end{theorem}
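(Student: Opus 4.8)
The plan is to treat the two statements separately. For statement~\ref{stmt:sound}, I would invoke Proposition~\ref{bis:equiv-congr}: since $\sim$ is an equivalence and a congruence over $\FINTREEPRED^{\partial}$, any equation derived from sound axioms relates only bisimilar terms, so it suffices to verify that each axiom of $\ETdag$ is sound, i.e. that for every closed substitution its two sides are bisimilar. The equations $(A_\eqcomm)$--$(A_\eqcp)$ were shown sound for bisimilarity on $T(\Sigma_{\FINTREEPRED})$ in Theorem~\ref{thm:soundness_completeness}; the bisimulations witnessing this refer only to the rules $(rl_\rlAxpref)$--$(rl_\rlPpref)$ governing $\delta$, $+$, $a.\_$ and $\ct_P$, which are inherited verbatim by $\FINTREEPRED^{\partial}$, so the same bisimulations establish soundness over the larger term algebra $T(\Sigma_{\FINTREEPRED}^{\partial})$. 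It remains to validate the seven $\onedag$-axioms.

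For each of $(A_\eqdagdlt)$--$(A_\eqdagsum)$ I would take as candidate bisimulation the symmetric closure of the identity relation together with the single pair (left side, right side), and check the transfer conditions of Definition~\ref{def:bisimulation} by reading the initial transitions and satisfied predicates off rules $(rl_\rlanin)$ and $(rl_\rlpnin)$. The transition conditions are immediate, since in every case the two sides have literally identical $a$-derivatives (and none at all in the $a \in \calB$ subcase of $(A_\eqdagain)$), so the identity part of the relation discharges them. The predicate conditions carry the content: for $(A_\eqdagain)$ one verifies that $\onedag(a.x)$ satisfies precisely those $P \notin \calQ$ with $P(a.x)$, which is exactly the multiset of witnesses summed on the right, and for $(A_\eqdagrec)$ one uses, for $P \in \pimp$ and $a \in \calA_P$, the equivalence $P \notin \calQ \Leftrightarrow P \notin \calQ \cap \pimp$, which holds precisely because $P \in \pimp$. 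I expect this predicate bookkeeping for $(A_\eqdagain)$ and $(A_\eqdagrec)$, turning on the interplay of $\calQ$, $\pimp$ and $\calA_P$, to be the only subtle point of the soundness argument.

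For statement~\ref{stmt:hnf} I would first isolate an elimination lemma: for every $r \in T(\Sigma_{\FINTREEPRED})$ and all $\calB \subseteq \calA$, $\calQ \subseteq \calP$, there is $r' \in T(\Sigma_{\FINTREEPRED})$ with $\ETdag \vdash \onedag(r) = r'$. The proof is by induction on the structure of $r$: the constants $\delta$ and $\ct_P$ are discharged by $(A_\eqdagdlt)$, $(A_\eqdagcpin)$ and $(A_\eqdagcpnin)$; a sum $r_1 + r_2$ is split by $(A_\eqdagsum)$ and finished by the induction hypothesis on $r_1,r_2$; and a prefix $a.s$ is treated by cases, using $(A_\eqdagain)$ to rewrite $\onedag(a.s)$ to the finite sum $\sum_{P \notin \calQ,\, P(a.s)} \ct_P$ when $a \in \calB$ (finite since $\calP$ is), and $(A_\eqdaganin)$ together with $(A_\eqdagrec)$ to rewrite it to $a.\onedagpar{\emptyset, \calQ \cap \pimp}(s)$ when $a \notin \calB$, after which the induction hypothesis applied to $s$ (with the new sets $\emptyset$ and $\calQ \cap \pimp$) eliminates the residual $\onedag$. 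With the lemma in hand, the theorem follows by a second structural induction on $t \in T(\Sigma_{\FINTREEPRED}^{\partial})$: constants are already $\onedag$-free, the cases $a.s$ and $s_1 + s_2$ close under the induction hypothesis by congruence, and for $t = \onedag(s)$ the induction hypothesis yields $\ETdag \vdash s = s'$ with $s' \in T(\Sigma_{\FINTREEPRED})$, whereupon the lemma eliminates the outer $\onedag$ from $\onedag(s')$.

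The main obstacle lies in the prefix case of the elimination lemma: $(A_\eqdagrec)$ does not discard $\onedag$ but regenerates it as $\onedagpar{\emptyset, \calQ \cap \pimp}(s)$ on the proper subterm $s$, with altered index sets. The induction must therefore be driven by the structural size of the argument of $\onedag$ rather than by the number of $\onedag$ symbols, and the lemma must be quantified uniformly over $\calB$ and $\calQ$ so that the induction hypothesis covers the residual restriction with the new sets $\emptyset$ and $\calQ \cap \pimp$. Once this is arranged, well-foundedness is immediate because $s$ is strictly smaller than $a.s$.
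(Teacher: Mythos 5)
Your proposal is correct and follows essentially the same route as the paper: the paper's soundness argument likewise checks each $\onedag$-axiom by observing that both sides have literally identical derivatives and the same satisfied predicates (packaged there as an auxiliary lemma stating that identical transitions plus identical predicates imply bisimilarity, which is exactly what your identity-plus-one-pair bisimulation establishes), with the same $\calQ$/$\pimp$/$\calA_P$ bookkeeping for $(A_{\eqdagain})$ and $(A_{\eqdagrec})$. Your elimination lemma — structural induction on $\onedag$-free arguments, quantified uniformly over $\calB$ and $\calQ$ so that the residual $\onedagpar{\emptyset,\calQ\cap\pimp}$ generated by $(A_{\eqdagrec})$ is covered — is precisely the paper's proof of the second statement, and you even make explicit the outer induction that the paper leaves implicit in its ``it suffices'' step.
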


As proving completeness for $\FINTREEPRED^{\partial}$ can be reduced to showing completeness for $\FINTREEPRED$ (already proved in Theorem~\ref{thm:soundness_completeness}), the following result is an immediate consequence of Theorem~\ref{thm:soundness_hnf_dagger}:

\begin{corollary}
\label{cor:completeness_dagger}
$\ETdag$ is sound and complete for bisimilarity on $T(\Sigma_{\FINTREEPRED}^{\partial})$. 
\end{corollary}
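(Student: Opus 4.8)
The plan is to reduce soundness-and-completeness of $\ETdag$ to the results already established for its dagger-free fragment, using the elimination property recorded in Theorem~\ref{thm:soundness_hnf_dagger}. Soundness requires no new work, since it is precisely statement~\ref{stmt:sound} of Theorem~\ref{thm:soundness_hnf_dagger}. Thus the only thing left to prove is completeness: that $s \sim t$ implies $\ETdag \vdash s = t$ for all $s, t \in T(\Sigma_{\FINTREEPRED}^{\partial})$.

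The first step is to invoke statement~\ref{stmt:hnf} of Theorem~\ref{thm:soundness_hnf_dagger} to obtain dagger-free witnesses $s', t' \in T(\Sigma_{\FINTREEPRED})$ with $\ETdag \vdash s = s'$ and $\ETdag \vdash t = t'$. By soundness (statement~\ref{stmt:sound}) these provable equalities yield $s \sim s'$ and $t \sim t'$, so from the hypothesis $s \sim t$ together with the fact that $\sim$ is an equivalence relation (Proposition~\ref{bis:equiv-congr}) I conclude $s' \sim t'$.

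The one point that genuinely needs care — and which I would flag as the main subtlety rather than the main difficulty — is that $s'$ and $t'$ live in $T(\Sigma_{\FINTREEPRED})$, while the bisimilarity just derived is bisimilarity in the larger system $\FINTREEPRED^{\partial}$, whereas the completeness result in Theorem~\ref{thm:soundness_completeness} speaks of bisimilarity in $\FINTREEPRED$. To bridge this, I would observe that $\FINTREEPRED^{\partial}$ is a disjoint extension of $\FINTREEPRED$: its signature contains $\Sigma_{\FINTREEPRED}$ and it adds only the operators $\onedag$ together with the rules $(rl_{\rlanin})$ and $(rl_{\rlpnin})$ governing them, introducing no new rules for the operators of $\FINTREEPRED$. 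Hence, by the standard fact recorded after Definition~\ref{def:disjExt}, two terms of $T(\Sigma_{\FINTREEPRED})$ are bisimilar in $\FINTREEPRED$ iff they are bisimilar in $\FINTREEPRED^{\partial}$, so $s' \sim t'$ already holds in $\FINTREEPRED$.

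With bisimilarity transferred to the core system, ground-completeness of $\ETz$ over $T(\Sigma_{\FINTREEPRED})$ (Theorem~\ref{thm:soundness_completeness}) turns $s' \sim t'$ into $\ETz \vdash s' = t'$, and since $\ETz \subseteq \ETdag$ this is in particular an $\ETdag$-derivation. Chaining the provable equalities $\ETdag \vdash s = s' = t' = t$ then gives $\ETdag \vdash s = t$, completing the argument. There is no real obstacle in the corollary itself: all the substantive work sits in Theorem~\ref{thm:soundness_hnf_dagger} (especially the elimination of the $\onedag$ operators) and in Theorem~\ref{thm:soundness_completeness}, and the corollary is essentially the glue that composes them, the only place to stay alert being the switch between $\FINTREEPRED$ and $\FINTREEPRED^{\partial}$ just described.
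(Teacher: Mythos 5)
Your proof is correct and follows essentially the same route as the paper: it reduces completeness for $T(\Sigma_{\FINTREEPRED}^{\partial})$ to completeness for $T(\Sigma_{\FINTREEPRED})$ by eliminating $\onedag$ via Theorem~\ref{thm:soundness_hnf_dagger}, transferring bisimilarity across the disjoint extension, and then invoking Theorem~\ref{thm:soundness_completeness}. The care you take with the switch between bisimilarity in $\FINTREEPRED^{\partial}$ and in $\FINTREEPRED$ is a point the paper leaves implicit, but it is exactly the intended argument.
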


\section{Smooth and distinctive operations}
\label{sec:smooth}

Recall that our goal is to provide a sound and ground-complete axiomatization for bisimilarity on systems specified in the {\preg} format. As the {\preg} format is too permissive for achieving this result directly, our next task is to find a class of operations for which we can build such an axiomatization by ``easily" reducing it to the completeness result for $\FINTREEPRED$, presented in Theorem~\ref{thm:soundness_completeness}. In the literature, these operations are known as \emph{smooth and distinctive}~\cite{Aceto:1994:TSR:184662.184663}. As we will see, these operations are incrementally identified by imposing suitable restrictions on {\preg} rules. The standard procedure is to first find the \emph{smooth} operations, based on which one determines the \emph{distinctive} ones.

\begin{definition}[Smooth operation]
\label{def:smooth}
~
\begin{enumerate}\itemsep1pt
\setlength\itemsep{1ex}
\item \label{eq:smoothtranz} A {\preg} transition rule is \emph{smooth} if it is of the following format:

\begin{equation}
\notag
\dfrac{
\begin{array}{c@{~~~}c}
  \{ x_i \xrightarrow{a_{i}} y_{i} \mid i \in \act \} &
  \{ P_{i} x_i \mid i \in \sat \} 
  \\
  \{ x_i \notranz{b} \hspace{7pt} \mid i \in \nact, b \in \calB_i \} &
  \{ \neg Q x_i \mid i \in \nsat, Q \in \calQ_i \}
\end{array}
}
{
f(x_1, \ldots, x_l) \xrightarrow{c}
C[\vec{x}, \vec{y}]
}
\end{equation}

where
\begin{enumerate}\itemsep1pt
\item \label{def:smooth:premise}$\act,\sat,\nact,\nsat$ disjointly cover the set $L = \{1,\ldots,l\}$,
\item in the target $C[\vec{x}, \vec{y}]$ we allow only:
$y_i$ $(i \in \act)$, $x_i$ $(i \in \nact \cup \nsat)$. \label{lb:no-x-target}
\end{enumerate}

\item \label{eq:smoothsat} A $\preg$ predicate rule is {\em smooth} if it has the form above, its premises satisfy condition (\ref{def:smooth:premise}) and its conclusion is $P(f(x_1,\ldots, x_l))$ for some $P \in\mathcal{P}$. 

\item An operation $f$ of a {\preg} system is \emph{smooth} if all its (transition and predicate) rules are smooth.

\end{enumerate}
\end{definition}

By Definition~\ref{def:smooth}, a rule $\rho$ is smooth if it satisfies the following properties:
\begin{itemize}\itemsep1pt
\item a position $i$ cannot be tested both positively and negatively at the same time,

\item positions tested positively are either from {\act} or {\sat} and they are not tested for the performance of multiple transitions (respectively, for the satisfiability of multiple predicates) within the same rule, and

\item if $\rho$ is a transition rule, then the occurrence of variables at positions $i \in \act \cup \sat$ is not allowed in the target of the consequent of $\rho$.
\end{itemize}

\begin{remark}
Note that we can always consider a position $i$ that does not occur as a premise in a rule for $f$ as being negative, with the empty set of constraints (i.e. either $i \in \nact$ and $\calB_i = \emptyset$, or $i \in \nsat$ and $\calQ_i = \emptyset$).
\end{remark}

\begin{definition}[Distinctive operation]
\label{def:smooth-distinctive}
An operation $f$ of a $\preg$ system is \emph{distinctive} if it is smooth and:
\begin{itemize}\itemsep1pt
\item for each argument $i$, either all rules for $f$ test $i$ positively, or none of them does, and

\item for any two distinct rules for $f$ there exists a position $i$ tested positively, such that one of the following holds:
\begin{itemize}\itemsep1pt
\item[-] both rules have actions that are different in the premise at position $i$,
\item[-] both rules have predicates that are different in the premise at position $i$,
\item[-] one rule has an action premise at position $i$, and the other rule has a predicate test at the same position $i$.
\end{itemize}
\end{itemize}
\end{definition}

According to the first requirement in Definition~\ref{def:smooth-distinctive}, we state that for a smooth and distinctive operation $f$, a position $i$ is \emph{positive} (respectively, \emph{negative}) for $f$ if there is a rule for $f$ such that $i$ is tested positively (respectively, negatively) for that rule.

The existence of a family of smooth and distinctive operations ``describing the behaviour" of a general {\preg} operation is formalized by the following lemma:

\begin{lemma}
\label{lm:distinctive-law-all}
Consider a {\preg} system $G$. Then there exist a {\preg} system $G'$, which is a disjoint extension of $G$ and $\FINTREEPRED$, and a finite axiom system $\ET$ such that
\begin{enumerate}
\item $\ET$ is sound for bisimilarity over any disjoint extension $G''$ of $G'$, and
\item for each term $t$ in $T(\Sigma_{G})$ there is some term $t'$ in $T(\Sigma_{G'})$ such that $t'$ is built solely using smooth and distinctive operations and $\ET$ proves $t = t'$.
\end{enumerate}

\end{lemma}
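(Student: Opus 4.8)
The plan is to follow the two-stage strategy of~\cite{Aceto:1994:TSR:184662.184663}, adapting it so that predicate premises and predicate rules are handled on exactly the same footing as action premises and transition rules. I would build $G'$ incrementally, starting from $G \cup \FINTREEPRED$ and adding, for every operation $f$ of $G$, a family of auxiliary operations; the axiom system $\ET$ then collects the defining equations introduced along the way. First I would replace every operation by a \emph{smooth} one, and then replace every smooth operation by a nondeterministic sum of \emph{distinctive} ones. Both reductions are carried out uniformly over the transition and the predicate rules of $f$.

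\textbf{Smoothing.} Fix $f$ and inspect its rules. There are three obstructions to smoothness: a position $i$ tested in more than one of \act, \sat, \nact, \nsat; a position carrying several positive premises (some \actpar{i} or \satpar{i} has more than one element); and an occurrence of a source variable $x_i$ with $i \in \act \cup \sat$ in the target. I would remove all three at once by \emph{duplicating arguments}: introduce a smooth operation $\bar f$ whose arity is obtained by giving each position $i$ one fresh copy for every positive premise at $i$ (each such copy belongs to \act or \sat and is tested for exactly one action, respectively one predicate), one copy in \nact gathering all negative action premises (with constraint set $\calB_i$), one copy in \nsat gathering all negative predicate premises (with set $\calQ_i$), and, if needed, one untested copy --- which by the Remark after Definition~\ref{def:smooth} counts as a negative position with empty constraints --- to carry the occurrences of $x_i$ in the target. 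In the rule for $\bar f$, the derivative produced by an action copy replaces the corresponding $y_{ij}$, while the source of a negative or untested copy replaces $x_i$; by construction the target then mentions only variables permitted by Definition~\ref{def:smooth}(\ref{lb:no-x-target}). The original operation is recovered by the \emph{diagonal} axiom that fills every copy of position $i$ with the single variable $x_i$. This is done for every transition rule and every predicate rule of $f$.

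\textbf{Distinctiveness and rewriting.} A smooth operation need not be distinctive, so I would partition the rules of each $\bar f$ into \emph{distinctive} families and introduce one operation per family. Taking the families to be singletons already suffices, since a single smooth rule vacuously satisfies both clauses of Definition~\ref{def:smooth-distinctive}. Writing $\bar f_1,\dots,\bar f_m$ for the resulting distinctive operations, I add the \emph{sum} axiom $\bar f(\vec z) = \bar f_1(\vec z) + \cdots + \bar f_m(\vec z)$. The diagonal and sum axioms together rewrite each occurrence of $f$ into a term over $\bar f_1,\dots,\bar f_m$ and the constructors of $\FINTREEPRED$; a structural induction on $t$, applying these equations at every operation symbol (using the induction hypothesis on the arguments), then produces the required $t' \in T(\Sigma_{G'})$ built from smooth and distinctive operations. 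Since $G$ has finitely many operations and finitely many rules, each with a finite set of premises, only finitely many auxiliary operations and axioms are generated, so $\ET$ is finite.

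\textbf{Soundness and the main obstacle.} Soundness of each axiom over an \emph{arbitrary} disjoint extension $G''$ of $G'$ is established by exhibiting the evident bisimulation: because a disjoint extension adds no new rules for $f$, $\bar f$, the $\bar f_k$, or $+$, the behaviour of both sides of every axiom is computed from the same fixed rules applied to the same argument processes, regardless of what $G''$-operations those processes use. For the diagonal axiom one checks that feeding the same process into all copies of a position reproduces exactly the premises, derivatives and target of the original rule. For the sum axiom one uses that, over $+$, both transitions and predicate satisfaction combine \emph{disjunctively} (rules $(rl_\rlsumLpref)$, $(rl_\rlsumRpref)$ and $(rl_\rlPL)$, $(rl_\rlPR)$); this is precisely where the earlier design decision to forbid universal predicate rules for $+$ is needed. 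I expect the smoothing step to be the main obstacle: in contrast to the purely action-based setting of~\cite{Aceto:1994:TSR:184662.184663}, a single position here may mix positive and negative, action and predicate tests, and a positively tested predicate position must contribute nothing to the target (Definition~\ref{def:smooth}(\ref{lb:no-x-target})), so the duplication scheme and the verification that the diagonal axiom is sound must be set up carefully to cover all these cases simultaneously.
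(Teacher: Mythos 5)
Your proposal is correct and follows essentially the same route as the paper's own proof: the paper smooths each operation by exactly your argument-duplication scheme (its ``barb'' count $B(\rho,i)=|\actpar{i}|+|\satpar{i}|+C_1+C_2$ is precisely one copy per positive premise, plus copies for negative action tests, negative predicate tests, and target occurrences of $x_i$), recovers $f$ via the same diagonal law, and then partitions the rules of each smooth operation into distinctive families (noting, as you do, that singleton families always suffice) and adds the same sum axiom, with soundness over arbitrary disjoint extensions argued just as you sketch. The only detail you leave implicit---that $\bar f$ must have a single arity obtained by maximizing the per-rule copy counts over all rules of $f$, with unused copies treated as untested (negative, empty-constraint) positions---is resolved in the paper exactly along the lines your Remark-based convention already suggests.
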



\subsection{Axiomatizing smooth and distinctive {\preg} operations}
\label{sec:axiom-amoothAndSist}



To start with, consider, for the good flow of the presentation, that we only handle explicit predicates (\emph{i.e.}, we take $\pimp = \emptyset$). Towards the end of the section we discuss how to extend the presented theory to implicit predicates.
We proceed in a similar fashion to~\cite{Aceto:1994:TSR:184662.184663} by defining a set of laws used in the construction of a complete axiomatization for bisimilarity on terms built over smooth and distinctive operations. The strength of these laws lies in their capability of reducing terms to their head normal form, thus reducing completeness for general {\preg} systems to completeness of $\ETz$ (which has already been proved in Section~\ref{sec:axFinTrees}).

\begin{definition}
\label{def:axioms}
Let $f$ be a smooth and distinctive $l$-ary operation of a {\preg} system $G$, such that ${\FINTREEPRED}^{\partial} \sqsubseteq G$.
\begin{enumerate}\itemsep1pt
\setlength\itemsep{1ex}
\item \label{enm:distr} For a positive position $i \in \all = \{1, \ldots, l\}$, the \emph{distributivity law} for $i$ w.r.t. $f$ is given as follows:
\[
f(X_1, \ldots, X'_i + X''_i, \ldots, X_l) =
f(X_1, \ldots, X'_i, \ldots, X_l) +
f(X_1, \ldots, X''_i, \ldots, X_l).
\]

\item \label{enm:actpred}
{For a rule $\rho \in \setrules$ for $f$ the \emph{{trigger law}} is, depending on whether $\rho$ is a transition or a predicate rule:}

\[
f(\vec{X}) = 
\left\{
\begin{array}{rll}
c.C[\vec{X}, \vec{y}] &,~\rho \in \setrulestranz &\emph{{(action law)}}\\
\ct_P &,~\rho \in \setrulessat &\emph{{(predicate law)}}
\end{array}
\right.
\]

\noindent
where

\[
X_i \equiv
\left\{
\begin{array}{rl}
a_i.y_i &,~i \in \act\\
\ct_{P_i} &,~i \in \sat\\
\onedagpar{{\calB_i}, {\calQ_i}}(x_i) &,~i \in \nact \cup \nsat\\
\end{array}
\right..
\]

\item \label{enm:dead}
Suppose that for $i \in \all$, term $X_i$ is in one of the forms $\delta, z_{i}, \ct_{P_i}, a.z_i, a.z_i + z'_i$ or $\ct_{P_i} + z_i$.
Suppose further that for each rule for $f$ there exists $X_j \in \vec{X}$ ($j \in \{1, \ldots, l\}$) s.t. one of the following holds:

\begin{itemize}\itemsep1pt
\item $j \in \act$ and ($X_j \equiv \delta$ or $X_j \equiv b.z_j ~ (b \not= a_j)$ or $X_j \equiv \ct_Q$, for some $Q$),
\item $j \in \sat$ and ($X_j \equiv \delta$ or $X_j \equiv \ct_{Q}~ \text{(}Q \not= P_j\text{)}$ or $X_j \equiv b.z_j$, for some $b$),
\item $j \in \nact$ and $X_j \equiv b.z_j + z'_j$, where $b \in \calB_j$,
\item $j \in \nsat$ and $X_j \equiv \ct_Q + z_j$, where $Q \in \calQ_j$.
\end{itemize}

Then the \emph{deadlock law} is as follows:
\[
f(\vec{X}) = \delta.
\]
\end{enumerate}
\end{definition}

\begin{example}
\label{ex:toy}

Consider the \emph{right-biased sequential composition} operation $\_\hspace{1.5pt};^{r}\hspace{-3pt}\_$\hspace{2pt}, whose semantics is given by the rules $
\frac{x\downarrow ~ y ~\xrightarrow{a}~ y'}{x ~;^{r}~ y ~\xrightarrow{a}~ y'}$,\hspace{0.5ex}$
\frac{x\downarrow ~ y\downarrow}{(x ~;^{r}~ y)\downarrow}$, and\hspace{0.5ex}$
\frac{x\downarrow ~ y\uparrow}{(x ~;^{r}~ y)\uparrow}
$, where $\downarrow$ \,and $\uparrow$ are, respectively, the \emph{immediate termination} and \emph{immediate divergence} predicates. 
$\_\hspace{1.5pt};^{r}\hspace{-3pt}\_$\hspace{2pt} is one of the auxiliary operations generated by the algorithm for deriving smooth and distinctive operations when axiomatizing the \emph{sequential composition} in the presence of the two mentioned predicates.

The laws derived according to Definition~\ref{def:axioms}\, for this system are:

\begin{center}
\begin{tabular}{rl@{\hspace{4ex}}rl}
$(x + y) ~;^{r} z$ &$=~~ x ~;^{r} z ~~+~~ y ~;^{r} z$&
$\delta ~;^{r} y$ &$=~~ \delta$\\

$x ~;^{r} (y + z)$ &$=~~ x ~;^{r} y ~~+~~ z ~;^{r} z$&
$k_{\uparrow} ~;^{r} y$ &$=~~ \delta$\\

$k_{\downarrow} ~;^{r} a.y$ &$=~~ a.y$&
$a.x ~;^{r} y$ &$=~~ \delta$\\

$k_{\downarrow} ~;^{r} k_{\downarrow}$ &$=~~ k_{\downarrow}$&
$x ~;^{r} \delta$ &$=~~ \delta$\\

$k_{\downarrow} ~;^{r} k_{\uparrow}$ &$=~~ k_{\uparrow}$&
\ldots

\end{tabular}
\end{center}

\end{example}

\medskip

\begin{theorem}
\label{thm:s&c-smooth-dist}
Consider $G$ a {\preg} system such that ${\FINTREEPRED}^{\partial} \sqsubseteq G$. Let $\Sigma \subseteq \Sigma_G \setminus \Sigma_{\FINTREEPRED}^{\partial}$ be a collection of smooth and distinctive operations of $G$. Let $\ET_G$ be the finite axiom system that extends $\ETdag$ with the following axioms for each $f \in \Sigma$:
\begin{itemize}\itemsep1pt
\item for each positive argument $i$ of $f$, a distributivity law (Definition~\ref{def:axioms}.\ref{enm:distr}),
\item for each transition rule for $f$, an action law (Definition~\ref{def:axioms}.\ref{enm:actpred}),
\item for each predicate rule for $f$, a predicate law (Definition~\ref{def:axioms}.\ref{enm:actpred}), and
\item all deadlock laws for $f$ (Definition~\ref{def:axioms}.\ref{enm:dead}).
\end{itemize}
The following statements hold for $\ET_G$, for any $G'$ such that $G \sqsubseteq G'$:
\begin{enumerate}\itemsep1pt
\item $\ET_G$ is sound for bisimilarity on $T(\Sigma_{G'})$. \label{stmt:sd-sound}

\item $\ET_G$ is head normalizing for $T(\Sigma \cup \Sigma_{\FINTREEPRED}^{\partial})$. \label{stmt:sd-hnf}
\end{enumerate}
\end{theorem}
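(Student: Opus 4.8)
The plan is to prove the two statements separately, relying heavily on the structural constraints that smoothness and distinctiveness impose on the rules. For statement~\ref{stmt:sd-sound} (soundness), the strategy is to check each of the four families of laws in turn and show that the two sides of every equation are bisimilar in $T(\Sigma_{G'})$. Since $\sim$ is a congruence (Proposition~\ref{bis:equiv-congr}) and $G \sqsubseteq G'$ preserves bisimilarity, it suffices to verify soundness of the new axioms relative to the operational rules for $f$, while soundness of the $\ETdag$ fragment is already available from Theorem~\ref{thm:soundness_hnf_dagger}(\ref{stmt:sound}). For the distributivity law, I would argue that at a positive position the rules only ever \emph{test} the argument (either for a single transition or a single predicate, by smoothness), so distributing a sum over that position neither creates nor destroys any derivation: every rule application to $f(\ldots, X_i' + X_i'', \ldots)$ factors through a rule application that uses exactly one summand, matching rule $(rl_\rlsumLpref)$/$(rl_\rlsumRpref)$. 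For the trigger laws, the key observation is that the substituted arguments $X_i$ are chosen precisely so that position $i$ supplies exactly the premise demanded by the rule $\rho$: an $a_i.y_i$ discharges a positive transition premise, a $\ct_{P_i}$ discharges a positive predicate premise, and an $\onedagpar{\calB_i,\calQ_i}(x_i)$ discharges the negative premises at position $i$ by the semantics of $\onedag$ in Fig.~\ref{fig:TSSdag}. One then shows that $\rho$ is the \emph{only} rule that can fire on $f(\vec X)$ (this is where distinctiveness is essential) and that its conclusion is exactly the right-hand side. For the deadlock laws, one shows that the designated argument $X_j$ blocks every rule: a positive premise cannot be met ($\delta$, wrong action/predicate), or a negative premise is violated (the offending $b.z_j$ or $\ct_Q$ summand is present), so $f(\vec X)$ has no outgoing transitions and satisfies no predicate, hence is bisimilar to $\delta$.

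For statement~\ref{stmt:sd-hnf} (head normalization), the plan is to proceed by structural induction on terms $t \in T(\Sigma \cup \Sigma_{\FINTREEPRED}^{\partial})$. The base cases ($\delta$, $\ct_P$) and the cases for the $\Sigma_{\FINTREEPRED}^{\partial}$ operations ($a.\_$, $\_+\_$, $\onedag$) are handled by Lemma~\ref{lem:hnf_exists} and Theorem~\ref{thm:soundness_hnf_dagger}(\ref{stmt:hnf}). The interesting inductive step is $t = f(t_1, \ldots, t_l)$ with $f \in \Sigma$: by the induction hypothesis each $t_i$ can be rewritten to a head normal form $\sum_{k} a_k.t_{i,k} + \sum_{m} \ct_{P_{i,m}}$. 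The plan is then to push $f$ down using the axioms in the stated order. First, apply the distributivity laws at every positive argument of $f$ to split each positive $t_i$ into its individual summands, reducing $f(\vec t)$ to a sum of terms $f(\vec X)$ in which each positive argument is a single summand ($a.z$, $\ct_P$, or $\delta$). Then, for each such summand term, either exactly one rule $\rho$ for $f$ is enabled --- in which case the matching trigger law rewrites it to $c.C[\ldots]$ or $\ct_P$ --- or the argument configuration blocks every rule, in which case a deadlock law rewrites it to $\delta$; distinctiveness guarantees that these cases are exhaustive and mutually exclusive, so the rewriting is well-defined. The resulting sum, after absorbing $\delta$ via $(A_\equnit)$ and collecting the predicate witnesses, is in head normal form.

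The main obstacle I expect is the case analysis showing that the distributivity/trigger/deadlock laws form a \emph{complete} and \emph{coherent} set when applied to an arbitrary head-normalized argument tuple: one must verify that after full distribution, every resulting positive argument has been reduced to one of the shapes covered by the hypotheses of the trigger and deadlock laws, and that exactly one of ``a unique rule fires'' or ``a deadlock law applies'' holds. This is precisely where the two clauses of distinctiveness (Definition~\ref{def:smooth-distinctive}) do the work: the uniformity clause (every argument is either positive in all rules or in none) guarantees that distribution can be applied cleanly, and the separation clause (any two rules differ at some positively-tested position by action, by predicate, or by transition-versus-predicate) guarantees that at most one rule survives once the positive arguments are fixed to single summands. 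A subtlety to track carefully is the treatment of negative positions: there the argument is wrapped in $\onedag$ by the trigger law, and one must ensure, using Theorem~\ref{thm:soundness_hnf_dagger}(\ref{stmt:hnf}), that the resulting $\onedag$-terms can themselves be head normalized so that the overall target $C[\vec X, \vec y]$ lies in $T(\Sigma \cup \Sigma_{\FINTREEPRED}^{\partial})$ and the induction can close. The restriction to explicit predicates ($\pimp = \emptyset$) assumed at the start of the section simplifies this last point, since then $\onedagpar{\emptyset,\calQ\cap\pimp}$ collapses and the action law's target involves no residual predicate propagation.
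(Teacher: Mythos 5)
Your overall architecture coincides with the paper's proof. Soundness is checked law by law exactly as you describe (distributivity using that positive positions are only tested and, crucially, that their variables cannot occur in rule targets; trigger laws by showing the chosen rule $\rho$ is the \emph{only} rule that can fire, which is where distinctiveness enters; deadlock laws by showing no rule fires at all), and head normalization is reduced to the claim that $f$ applied to head-normal-form arguments is provably equal to a head normal form, established by distributing over positive arguments and then applying a trigger or deadlock law. The paper runs this second part as an induction on the combined size of the arguments rather than your structural induction with a batch distribution phase, but that difference is cosmetic.

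There is, however, one genuine gap, located exactly at the point you flag as a ``subtlety'': the negative positions. The trigger laws (Definition~\ref{def:axioms}.\ref{enm:actpred}) are equations whose left-hand side has the syntactic shape $f(\ldots, \onedagpar{\calB_i,\calQ_i}(x_i), \ldots)$ at every negative position $i$. After distribution you are holding a term $f(\vec{t})$ whose negative arguments are plain head normal forms, so \emph{no trigger law matches it}; before any trigger law can fire you must derive, inside $\ET_G$, the equation $t_k = \onedagpar{\calB_k,\calQ_k}(t_k)$ for each negative position $k$. This is a $\onedag$-\emph{introduction} step, provable precisely when no top-level summand of $t_k$ clashes with $\calB_k$ or $\calQ_k$; the paper isolates it as a separate auxiliary lemma (proved by structural induction from axioms $(A_{\eqdagdlt})$--$(A_{\eqdagsum})$), and in the complementary case --- some summand $b.z + z'$ with $b \in \calB_k$, or $\ct_Q + z$ with $Q \in \calQ_k$ --- it is a \emph{deadlock} law (the clauses for $j \in \nact$ and $j \in \nsat$ in Definition~\ref{def:axioms}.\ref{enm:dead}) that applies, not a trigger law. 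Your plan instead invokes Theorem~\ref{thm:soundness_hnf_dagger}(\ref{stmt:hnf}), which goes in the opposite direction: it \emph{eliminates} $\onedag$ from closed terms and cannot produce the required pattern on the left-hand side. Moreover, the worry it is meant to address --- head-normalizing the $\onedag$-terms occurring in the target $C[\vec{X},\vec{y}]$ --- is not a real obstacle: the target sits under the prefix $c.\_$\,, and Definition~\ref{def:hnf} places no constraints on subterms under prefixes, so with $\pimp = \emptyset$ the right-hand side of an action law is already in head normal form. The missing ingredient is thus the $\onedag$-introduction lemma together with the negative-position case split (introduce $\onedag$ and trigger, or else deadlock); without it your rewriting strategy stalls at every operation that has negatively tested arguments.
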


Obtaining the soundness of the action law  (Definition~\ref{def:axioms}.\ref{enm:actpred}) requires some care when allowing for specifications with implicit predicates ($\pimp \not= \emptyset$).
Consider a scenario in which a transition rule for a smooth and distinctive operation $f$ is of the form $\frac{H}{f(\vec{X}) \xrightarrow{c} C[\vec{X}, \vec{y}]}$.
Assume the closed instantiation $\vec{X} = \vec{s}$, $\vec{y} = \vec{t}$ and assume that $P(c. C[\vec{s},\vec{t}])$ holds for some predicate $P$ in $\pimp$. This means that $P(C[\vec{s},\vec{t}])$ holds. In order to preserve the soundness of the action law, $P(f(\vec{s}))$ should also hold, but this is impossible since $f$ is distinctive. One possible way of ensuring the soundness of the action law in the presence of implicit predicates is to stipulate some syntactic consistency requirements on the language specification. One sufficient requirement would be that if predicate rule $\frac{H'}{P(C[\vec{z},\vec{y}])}$ is derivable, then the system should contain a predicate rule $\frac{H''}{P(f[\vec{z}])}$ with $H'' \subseteq H'$. This is enough to guarantee that if the right-hand side of the action law satisfies $P$ then so does the left-hand side. 

\section{Soundness and completeness}
\label{sec:completeness}

Let us summarize our results so far. By Theorem~\ref{thm:s&c-smooth-dist}, it follows that, for any {\preg} system $G \sqsupseteq {\FINTREEPRED}^{\partial}$, there is an axiomatization that is head normalizing for $T(\Sigma \cup \Sigma_{\FINTREEPRED}^{\partial})$, where
$\Sigma \subseteq \Sigma_G \setminus \Sigma_{\FINTREEPRED}^{\partial}$ is a collection of smooth and distinctive operations of $G$.
Also, as hinted in Section~\ref{sec:smooth} (Lemma~\ref{lm:distinctive-law-all}), there exists a sound algorithm for transforming general {\preg} operations to smooth and distinctive ones.

So, for any {\preg} system $G$, we can build a {\preg} system $G' \sqsupseteq G$ and an axiomatization $\ET_{G'}$ that is head normalizing for $T(\Sigma_{G'})$. This statement is formalized as follows:

\begin{theorem}
\label{thm:s&c-fin-proc}
Let $G$ be a {\preg} system. Then there exist $G'\sqsupseteq G$ and a finite axiom system $\ET_{G'}$ such that
\begin{enumerate}\itemsep1pt
\item $\ET_{G'}$ is sound for bisimilarity on $T(\Sigma_{G'})$, \label{snd}
\item $\ET_{G'}$ is head normalizing for $T(\Sigma_{G'})$,
\end{enumerate}
and moreover, $G'$ and $\ET_{G'}$ can be effectively constructed from $G$.
\end{theorem}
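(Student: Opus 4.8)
The plan is to assemble the desired system by composing the two reductions already established: Lemma~\ref{lm:distinctive-law-all}, which turns the operations of $G$ into smooth and distinctive ones, and Theorem~\ref{thm:s&c-smooth-dist}, which head normalizes terms built from smooth and distinctive operations together with $\Sigma_{\FINTREEPRED}^{\partial}$. First I would apply Lemma~\ref{lm:distinctive-law-all} to $G$, obtaining a {\preg} system $G'$ that disjointly extends both $G$ and $\FINTREEPRED$, together with a finite, sound axiom system $\ET_1$ whose open equations rewrite each operation of $G$ into a context over smooth and distinctive operations of $G'$. Since the restriction operator $\onedag$ and its two rules $(rl_{\rlanin})$, $(rl_{\rlpnin})$ are fresh, I may assume without loss of generality that they are included, so that $\FINTREEPRED^{\partial} \sqsubseteq G'$; adding them is a disjoint extension and hence preserves bisimilarity.

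Let $\Sigma \subseteq \Sigma_{G'} \setminus \Sigma_{\FINTREEPRED}^{\partial}$ be the resulting collection of smooth and distinctive operations. Applying Theorem~\ref{thm:s&c-smooth-dist} to $G'$ with this $\Sigma$ yields a finite axiom system $\ET_2 \supseteq \ETdag$, consisting of the distributivity, trigger and deadlock laws of Definition~\ref{def:axioms}. I then set $\ET_{G'} := \ET_1 \cup \ET_2$, which is finite; since the smooth-and-distinctive construction of Lemma~\ref{lm:distinctive-law-all} is algorithmic and the laws of Definition~\ref{def:axioms} are generated mechanically from the rules of $G'$, both $G'$ and $\ET_{G'}$ are effectively constructible from $G$.

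Soundness (statement~\ref{snd}) is then immediate: $\ET_1$ is sound on every disjoint extension of $G'$ by Lemma~\ref{lm:distinctive-law-all}, and $\ET_2$ is sound on $T(\Sigma_{G'})$ by Theorem~\ref{thm:s&c-smooth-dist}.\ref{stmt:sd-sound}, so their union is sound on $T(\Sigma_{G'})$. For head normalization I would argue in two phases. In the first phase I apply the defining equations of $\ET_1$ as rewrite rules, bottom-up, to an arbitrary $t \in T(\Sigma_{G'})$, replacing every occurrence of an operation of $G$ by its smooth-and-distinctive context; this produces a bisimilar term $\tilde t \in T(\Sigma \cup \Sigma_{\FINTREEPRED}^{\partial})$ with $\ET_1 \vdash t = \tilde t$. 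In the second phase, Theorem~\ref{thm:s&c-smooth-dist}.\ref{stmt:sd-hnf} applies to $\tilde t$, giving a head normal form $t'$ with $\ET_2 \vdash \tilde t = t'$; composing yields $\ET_{G'} \vdash t = t'$.

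The step I expect to be the main obstacle is the domain reconciliation in the first phase. Lemma~\ref{lm:distinctive-law-all} only guarantees the reduction to smooth and distinctive operations for closed terms of $T(\Sigma_G)$, whereas here its equations must be applied to subterms of $T(\Sigma_{G'})$ whose arguments may already be finite-tree or restriction terms. The point to justify carefully is that the equations of $\ET_1$ are open defining equations of the form $f(\vec{x}) = E_f[\vec{x}]$ which, by their soundness together with the fact that $\sim$ is a congruence (Proposition~\ref{bis:equiv-congr}), may be instantiated with arbitrary terms and applied in any context. An induction on the number of occurrences of operations of $\Sigma_G \setminus \Sigma_{\FINTREEPRED}^{\partial}$ in $t$ then lifts the reduction from $T(\Sigma_G)$ to all of $T(\Sigma_{G'})$ and lands the result in $T(\Sigma \cup \Sigma_{\FINTREEPRED}^{\partial})$, after which the two imported head-normalization results fit together cleanly.
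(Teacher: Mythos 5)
Your proposal is correct and follows essentially the same route as the paper, whose proof simply combines Theorem~\ref{thm:s&c-smooth-dist} with the algorithm for transforming general {\preg} operations into smooth and distinctive ones (Lemma~\ref{lm:distinctive-law-all}). Your additional care about the domain mismatch---that the equations of Lemma~\ref{lm:distinctive-law-all} are open defining equations and hence lift from $T(\Sigma_G)$ to $T(\Sigma_{G'})$ via congruence and a bottom-up induction---is a detail the paper leaves implicit, and it is handled correctly.
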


\begin{proof}
{The result follows immediately by Theorem~\ref{thm:s&c-smooth-dist} and
by the existence of an algorithm used for transforming general {\preg} to smooth and distinctive operations}.
\end{proof}

\begin{remark}
Theorem~\ref{thm:s&c-fin-proc} guarantees ground-completeness of the generated axiomatization for well-founded {\preg} specifications, that is, {\preg} specifications in which each process can only exhibit finite behaviour.
\end{remark}

Let us further recall an example given in~\cite{Aceto:1994:TSR:184662.184663}. Consider the constant $\omega$, specified by the rule $\omega \xrightarrow{a} \omega$. Obviously, the corresponding action law  $\omega = a.\omega$ will apply for an infinite number of times in the  normalization process.
So the last step in obtaining a complete axiomatization is to handle infinite behaviour.

Let $t$ and $t'$ be two processes with infinite behaviour (remark that the infinite behaviour is a consequence of performing actions for an infinite number of times, so the extension to predicates is not a cause for this issue). Since we are dealing with finitely branching processes, it is well known that if two process terms are bisimilar at each finite depth, then they are bisimilar. One way of formalizing this requirement is to use the well-known \emph{Approximation Induction Principle} (AIP) \cite{Baeten:1991:PA:103272,Bergstra:1986:VAB:16663.16664}.

Let us first consider the operations $\pi_n(\cdot)$, $n \in \mathbb{N}$, known as \emph{projection operations}. The purpose of these operations is to stop the evolution of processes after a certain number of steps. The AIP is given by the following conditional equation:
\[
{x = y} \textnormal{ if } \pi_n(x) = \pi_n(y) ~ (\forall n \in \mathbb{N}).
\]

We further adapt the idea in \cite{Aceto:1994:TSR:184662.184663} to our context, and model the infinite family of projection operations $\pi_n(\cdot)$, $n \in \mathbb{N}$, by a binary operation $\cdot / \cdot$ defined as follows:

\begin{center}

\begin{tabular}{c@{~~~}c}
$\dfrac{x \xrightarrow{a} x' ~~ h \xrightarrow{c} h'}{x/h \xrightarrow{a} x'/h'} ~ (rl_{\rlaipa})$
&
$\dfrac{Px}{P(x/h)} ~ (rl_{\rlaipp})$ 
\end{tabular}
\end{center}
where $c$ is an arbitrary action. Note that $\cdot / \cdot$ is a smooth and distinctive operation.

The role of variable $h$ is to ``control" the evolution of a process, \textit{i.e.}, to stop the process in performing actions, after a given number of steps. Variable $h$ (the ``hourglass" in \cite{Aceto:1994:TSR:184662.184663}) will always be instantiated with terms of the shape $c^n$, inductively defined as: $c^0 = \delta$, $c^{n+1} = c.c^{n}$.

Let $G = (\Sigma_G, \setrulesG)$ be a {\preg} system. We use the notation $G_{/}$ to refer to the {\preg} system $(\Sigma_G \cup \{\cdot/\cdot\}, \setrulesG \cup \{(rl_{\rlaipa}), (rl_{\rlaipp})\})$ -- the extension of $G$ with $\cdot / \cdot$~. Moreover, we use the notation $\ET_{\it{AIP}}$ to refer to the axioms for the smooth and distinctive operation $\cdot / \cdot$, derived as in Section~\ref{sec:axiom-amoothAndSist} -- Definition~\ref{def:axioms}.

We reformulate AIP according to the new operation $\cdot/\cdot$~:
\[
x = y \textnormal{ if } x/c^n = y/c^n ~ (\forall n \in \mathbb{N})
\]

\begin{lemma}
\label{lm:aip}
AIP is sound for bisimilarity on $T(\Sigma_{\FINTREEPRED_{/}})$.
\end{lemma}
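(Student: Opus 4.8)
The plan is to find an explicit bisimulation that witnesses the conclusion of \textnormal{AIP}. I would define the symmetric relation
\[
R \;=\; \{\,(p,q) \mid p/c^n \sim q/c^n \text{ for all } n \in \mathbb{N}\,\}
\]
on closed terms of $\FINTREEPRED_{/}$, observe that any pair $(s,t)$ satisfying the premise of \textnormal{AIP} lies in $R$, and reduce everything to proving that $R$ is a bisimulation; soundness then follows because $\sim$ is the largest bisimulation. Two elementary facts about the projection operator, read off directly from $(rl_{\rlaipa})$, $(rl_{\rlaipp})$ and the BCCSP semantics of the hourglasses (recall $c^0=\delta$ makes no move, whereas $c^{n+1}=c.c^n$ has the single transition $c^{n+1}\xrightarrow{c}c^n$), drive the argument: \emph{(i)} $P(u/c^n)$ holds iff $Pu$ holds, for every predicate $P$ and every $n$; and \emph{(ii)} $u/c^{n+1} \xrightarrow{a} w$ iff $w = u'/c^n$ for some $u'$ with $u \xrightarrow{a} u'$, whereas $u/c^0$ has no transitions. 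Informally, $u/c^n$ copies the behaviour of $u$ down to depth $n$ and then halts.

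The predicate clause of $R$ is immediate from \emph{(i)}: if $(p,q)\in R$ and $Pp$, then $P(p/c^0)$, hence $P(q/c^0)$ by $p/c^0 \sim q/c^0$, hence $Pq$. For the transition clause, suppose $(p,q)\in R$ and $p \xrightarrow{a} p'$. By \emph{(ii)} we have $p/c^{n+1} \xrightarrow{a} p'/c^n$ for every $n$; matching this against $q/c^{n+1}$ via $p/c^{n+1} \sim q/c^{n+1}$ and decoding the answering move through \emph{(ii)} yields, for each $n$, a term $q'_n$ with $q \xrightarrow{a} q'_n$ and $p'/c^n \sim q'_n/c^n$. Since Lemma~\ref{lm:fin-bran} makes $\{q' \mid q \xrightarrow{a} q'\}$ finite, the pigeonhole principle gives a single $q^\ast$ occurring as $q'_n$ for infinitely many $n$, so $p'/c^n \sim q^\ast/c^n$ holds for arbitrarily large $n$.

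It remains to upgrade this cofinal agreement to agreement at \emph{every} depth, which is the only real obstacle and the precise point where finite branching is indispensable. I would isolate it in a monotonicity lemma: projections are cumulative, i.e.\ $(u/c^k)/c^m \sim u/c^m$ whenever $m \le k$. This is a short bisimulation argument --- after one synchronised step the instance $(k,m)$ reduces to $(k-1,m-1)$, and the predicate clause is settled by \emph{(i)}. Granting the lemma and using that $\sim$ is a congruence (Proposition~\ref{bis:equiv-congr}), for any $m$ I pick some $k \ge m$ with $p'/c^k \sim q^\ast/c^k$, apply $\cdot/c^m$ to both sides to get $(p'/c^k)/c^m \sim (q^\ast/c^k)/c^m$, and cancel the outer projections through the monotonicity lemma to conclude $p'/c^m \sim q^\ast/c^m$. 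As $m$ is arbitrary, $(p',q^\ast)\in R$, so $R$ is a bisimulation and \textnormal{AIP} is sound.
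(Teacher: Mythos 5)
Your proof is correct and takes essentially the same route as the paper's: the same candidate bisimulation $R = \{(p,q) \mid p/c^n \sim q/c^n \text{ for all } n \in \mathbb{N}\}$, the same dispatch of the predicate clause through $(rl_{\rlaipp})$, and the same use of Lemma~\ref{lm:fin-bran} to extract, among the finitely many $a$-successors of $q$, a single one that matches $p'$ at every depth. The only divergence is bookkeeping: the paper organizes the matching successors into nested, finite, nonempty sets $S_n$ (whose nestedness rests on the asserted implication $t_1/c^{n+1} \sim t^*/c^{n+1} \Rightarrow t_1/c^{n} \sim t^*/c^{n}$) and picks an element of their eventually constant intersection, whereas you apply pigeonhole to obtain a cofinally-working $q^\ast$ and then prove the cumulativity lemma $(u/c^k)/c^m \sim u/c^m$ for $m \le k$ (combined with congruence, Proposition~\ref{bis:equiv-congr}) to upgrade cofinal agreement to agreement at all depths --- an explicit justification of the depth-monotonicity that the paper leaves implicit.
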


In what follows we provide the final ingredients for proving the existence of a ground-complete axiomatization for bisimilarity on {\preg} systems. As previously stated, this is achieved by reducing completeness to proving equality in {\FINTREEPRED}. So, based on AIP, it would suffice to show that for any closed process term $t$ and natural number $n$, there exists an {\FINTREEPRED} term equivalent to $t$ at moment $n$ in time:

\begin{lemma}
\label{lm:aip-ftp}
Consider $G$ a {\preg} system. Then there exist $G' \sqsupseteq G_{/}$ and $\ET_{G'}$ with the property: $\forall t \in T(\Sigma_{G'}), \forall n \in \mathbb{N}, \exists t' \in T(\Sigma_{\FINTREEPRED})$ s.t. $\ET_{G'} \vdash t/c^n = t'$.
\end{lemma}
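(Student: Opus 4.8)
The plan is to instantiate Theorem~\ref{thm:s&c-fin-proc} on the system $G_{/}$ rather than on $G$ itself. This produces a disjoint extension $G' \sqsupseteq G_{/}$ together with a finite, effectively constructible axiom system $\ET_{G'}$ that is sound and head normalizing for $T(\Sigma_{G'})$. Because $\cdot/\cdot$ is a smooth and distinctive operation, the laws it contributes through Definition~\ref{def:axioms} (collected as $\ET_{\it{AIP}}$) belong to $\ET_{G'}$; concretely these are the distributivity law $(x+y)/z = x/z + y/z$, the action law $a.x/c.y = a.(x/y)$, the predicate law $\ct_P/y = \ct_P$, and the deadlock laws $\delta/y = \delta$ and $a.x/\delta = \delta$. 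These equations are exactly what is needed to strip one action layer off $t$ while decrementing the hourglass $c^n$.

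With $G'$ and $\ET_{G'}$ fixed, I would prove the statement by induction on $n$, keeping the quantification over $t \in T(\Sigma_{G'})$ universal at each level (so that the hypothesis may later be applied to the immediate subterms of $t$). The common first step is to head normalize: since $\ET_{G'}$ is head normalizing, $\ET_{G'} \vdash t = \sum_{i \in I} a_i.t_i + \sum_{j \in J} \ct_{P_j}$ with each $t_i \in T(\Sigma_{G'})$, whence $\ET_{G'} \vdash t/c^n = (\sum_{i \in I} a_i.t_i + \sum_{j \in J}\ct_{P_j})/c^n$ by congruence. In the base case $n=0$ we have $c^0 = \delta$; distributing $\cdot/\delta$ over the summands and applying $a.x/\delta = \delta$, $\ct_P/y = \ct_P$ and the unit law $(A_{\equnit})$ collapses the expression to $\sum_{j \in J}\ct_{P_j} \in T(\Sigma_{\FINTREEPRED})$ (the empty-sum case is just $\delta/\delta = \delta$). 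In the inductive step we use $c^{n+1} = c.c^n$, distribute over the summands, and apply the action law to obtain $a_i.(t_i/c^n)$ for each transition summand and the predicate law to leave each $\ct_{P_j}$ untouched, giving $\ET_{G'} \vdash t/c^{n+1} = \sum_{i \in I} a_i.(t_i/c^n) + \sum_{j \in J} \ct_{P_j}$. The induction hypothesis, applied to each $t_i$, furnishes finite-tree terms $t_i'$ with $\ET_{G'} \vdash t_i/c^n = t_i'$, so that $t/c^{n+1}$ is provably equal to $\sum_{i \in I} a_i.t_i' + \sum_{j \in J}\ct_{P_j} \in T(\Sigma_{\FINTREEPRED})$.

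The structural fact that makes the induction go through is that the hourglass is always either $\delta$ or of the form $c.c^{n}$, so when a layer is peeled the applicable equation is always the action law and the remaining projection depth strictly decreases; this is precisely what tames terms with genuinely infinite behaviour, such as $\omega$, for which the procedure yields $\omega/c^n = a.a.\cdots.a.\delta$ with $n$ prefixes. I expect the main delicate point to be predicates rather than the arithmetic of the recursion: one must check that replacing $t_i$ by the projected $t_i/c^n$ inside a prefix $a_i.(\cdot)$ does not change which predicates hold. Rule $(rl_{\rlaipp})$ guarantees that $s/h$ satisfies exactly the predicates of $s$, so $a_i.(t_i/c^n)$ satisfies an implicit predicate $P$ iff $a_i \in \calA_P$ and $t_i$ does, that is, iff $a_i.t_i$ does; hence no predicate is gained or lost and the witness summands $\ct_{P_j}$ stay correct. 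For the statement at hand only provability is at stake, so this observation suffices, while the subtler soundness concern for the action law when $\pimp \neq \emptyset$ is already absorbed into the soundness of $\ET_{G'}$ provided by Theorem~\ref{thm:s&c-fin-proc}.
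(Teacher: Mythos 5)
Your proposal is correct and takes precisely the route the paper intends: the paper states Lemma~\ref{lm:aip-ftp} without giving a proof, and the standard argument (inherited from Aceto--Bloom--Vaandrager) is exactly yours, namely instantiate Theorem~\ref{thm:s&c-fin-proc} on $G_{/}$, head normalize, and induct on $n$ using the distributivity, action, predicate and deadlock laws of $\cdot/\cdot$ to peel one layer off the hourglass at a time. The only technicality worth recording is that the predicate law generated by Definition~\ref{def:axioms} literally reads $\ct_P/\onedagpar{\emptyset,\emptyset}(y) = \ct_P$, so one should add the (immediate, $\ETdag$-derivable) equations $\delta = \onedagpar{\emptyset,\emptyset}(\delta)$ and $c^{n+1} = \onedagpar{\emptyset,\emptyset}(c^{n+1})$ before applying it to the hourglass terms.
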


At this point we can prove the existence of a sound and ground-complete axiomatization for bisimilarity on general $\preg$ systems:

\begin{theorem}[Soundness and Completeness]
\label{thm-s&c}
Consider $G$ a {\preg} system. Then there exist $G' \sqsupseteq G_{/}$ and $\ET_{G'}$ a finite axiom system, such that $\ET_{G'} \cup \ET_{\it{AIP}}$ is sound and complete for bisimilarity on $T(\Sigma_{G'})$.
\end{theorem}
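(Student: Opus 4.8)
The plan is to assemble the final result from the machinery already developed, treating soundness and completeness as separate obligations. For soundness, I would combine the soundness of $\ET_{G'}$ (guaranteed by Theorem~\ref{thm:s&c-fin-proc}.\ref{snd} for the extended system $G'$) with the soundness of AIP on the projection operation, namely Lemma~\ref{lm:aip}. Since $\ET_{\it{AIP}}$ consists precisely of the action, predicate, distributivity, and deadlock laws for the smooth and distinctive operation $\cdot/\cdot$, its soundness is an instance of Theorem~\ref{thm:s&c-smooth-dist}.\ref{stmt:sd-sound}. The conditional rule AIP itself is sound by Lemma~\ref{lm:aip}. Because soundness is preserved under union of sound axiom systems and a sound inference rule, $\ET_{G'} \cup \ET_{\it{AIP}}$ together with AIP is sound for bisimilarity on $T(\Sigma_{G'})$.

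The core of the argument is ground-completeness. First I would choose $G'$ and $\ET_{G'}$ supplied by Lemma~\ref{lm:aip-ftp}, so that $G' \sqsupseteq G_{/}$ and for every closed term $t \in T(\Sigma_{G'})$ and every $n \in \mathbb{N}$ there is a finite-tree term $t'_n \in T(\Sigma_{\FINTREEPRED})$ with $\ET_{G'} \vdash t/c^n = t'_n$. Now suppose $s \sim t$ for two closed terms $s, t \in T(\Sigma_{G'})$. The goal is to derive $\ET_{G'} \cup \ET_{\it{AIP}} \vdash s = t$ via AIP, which reduces the task to proving $s/c^n = t/c^n$ for every $n$. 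The key step is to show that each projection $s/c^n$ can be provably rewritten to an $\FINTREEPRED$ term and that these tree terms are equated by the finite-tree axiomatization.

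Concretely, for a fixed $n$, I would use Lemma~\ref{lm:aip-ftp} to obtain tree terms $s'_n$ and $t'_n$ in $T(\Sigma_{\FINTREEPRED})$ with $\ET_{G'} \vdash s/c^n = s'_n$ and $\ET_{G'} \vdash t/c^n = t'_n$. By the soundness direction already established, $s/c^n \sim s'_n$ and $t/c^n \sim t'_n$. The projection operation $\cdot/c^n$ preserves bisimilarity (congruence, Proposition~\ref{bis:equiv-congr}), so from $s \sim t$ we get $s/c^n \sim t/c^n$, and hence $s'_n \sim t'_n$ by transitivity. Since both are finite-tree terms, Theorem~\ref{thm:soundness_completeness} (ground-completeness of $\ETz$ on $T(\Sigma_{\FINTREEPRED})$) yields $\ETz \vdash s'_n = t'_n$, and therefore $\ET_{G'} \vdash s'_n = t'_n$ because $\ET_{G'} \supseteq \ETdag \supseteq \ETz$. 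Chaining the three derivations gives $\ET_{G'} \vdash s/c^n = t/c^n$ for every $n \in \mathbb{N}$. Applying the AIP conditional rule, whose premises we have just discharged for all $n$, we conclude $\ET_{G'} \cup \ET_{\it{AIP}} \vdash s = t$, as desired.

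The main obstacle I anticipate is not any single algebraic step but the careful bookkeeping around the projection operation and AIP: one must verify that $\cdot/c^n$ genuinely captures the $n$-th finite approximation of behaviour, so that bisimilarity of all projections is equivalent to bisimilarity of the full terms. This hinges on finite branching (Lemma~\ref{lm:fin-bran}), which guarantees that bisimilarity coincides with bisimilarity at every finite depth, legitimizing the use of AIP. A secondary subtlety is ensuring that the reduction of each $s/c^n$ to an $\FINTREEPRED$ term, and in particular the interplay with implicit predicates flagged after Theorem~\ref{thm:s&c-smooth-dist}, does not break the provable equalities; but since Lemma~\ref{lm:aip-ftp} already packages this reduction, the remaining work is to thread soundness, congruence, and the finite-tree completeness theorem together correctly.
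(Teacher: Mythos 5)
Your proposal is correct and follows essentially the same route the paper intends: soundness assembled from Theorem~\ref{thm:s&c-fin-proc}, Theorem~\ref{thm:s&c-smooth-dist} (for $\ET_{\it{AIP}}$, since $\cdot/\cdot$ is smooth and distinctive) and Lemma~\ref{lm:aip}, and completeness by using congruence to get $s/c^n \sim t/c^n$, Lemma~\ref{lm:aip-ftp} to rewrite both projections into $T(\Sigma_{\FINTREEPRED})$ terms, ground-completeness of $\ETz$ (Theorem~\ref{thm:soundness_completeness}) to equate them, and finally the AIP rule to conclude $s = t$. Your closing caveats (finite branching via Lemma~\ref{lm:fin-bran} legitimizing AIP, and the implicit-predicate side conditions being absorbed into Lemma~\ref{lm:aip-ftp}) match the paper's own treatment of these points.
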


\section{Motivation for handling predicates as first-class notions}
\label{sec:rationale}

In the literature on the theory of rule formats for Structural
Operational Semantics (especially, the work devoted to congruence
formats for various notions of bisimilarity), predicates are often
neglected at first and are only added to the considerations at a later
stage. The reason is that one can encode predicates quite easily by
means of transition relations. One can find a number of such encodings
in the literature---see, for instance,~\cite{assoc,Verhoef95}. In each
of these encodings, a predicate $P$ is represented as a transition
relation $\xrightarrow{P}$ (assuming that $P$ is a fresh action label)
with a fixed constant symbol as target. Using this translation, one can
axiomatize bisimilarity over {\preg} language specifications by first
converting them into ``equivalent'' standard GSOS systems, and then
applying the algorithm from~\cite{Aceto:1994:TSR:184662.184663} to
obtain a finite axiomatization of bisimilarity over the resulting GSOS
system.

In light of this approach, it is natural to wonder whether it is
worthwhile to develop an algorithm to axiomatize {\preg} language
specifications directly.  One possible answer, which has been
presented several times in the literature~\cite{Verhoef95}, is that
often one does not want to encode a language specification with
predicates using one with transitions only. Sometimes, specifications
using predicates are the most natural ones to write, and one should
not force a language designer to code predicates using
transitions. (However, one can write a tool to perform the translation
of predicates into transitions, which can therefore be carried out
transparently to the user/language designer.) Also, developing an
algorithm to axiomatize GSOS language specifications with predicates
directly yields insight into the difficulties that result from the
first-class use of, and the interplay among, various types of
predicates, as far as axiomatizability problems are concerned. These
issues would be hidden by encoding predicates as
transitions. Moreover, the algorithm resulting from the encoding would
generate axioms involving predicate-prefixing operators, which are
somewhat unintuitive.

Naturalness is, however, often in the eye of the beholder. Therefore,
we now provide a more technical reason why it may be worthwhile to
develop techniques that apply to GSOS language specifications with
predicates as first-class notions, such as the {\preg} ones. Indeed,
we now show how, using predicates, one can convert any standard GSOS
language specification $G$ into an equivalent {\em positive} one with
predicates $G^+$.

Given a GSOS language $G$, the system $G^+$ will have the same
signature and the same set of actions as $G$, but uses predicates
$\text{cannot}(a)$ for each action $a$. The idea is simply that ``$x
\:\text{cannot}(a)$'' is the predicate formula that expresses that ``$x$
does not afford an $a$-labelled transition''. The translation works as
follows.
\begin{enumerate}
\item Each rule in $G$ is also a rule in $G^+$, but one replaces each
  negative premise in each rule with its corresponding positive
  predicate premise. This means that $x \notranz{a}$ becomes $x
\:\text{cannot}(a)$.  
\item One adds to $G^+$ rules defining the predicates
  $\text{cannot}(a)$, for each action $a$. This is done in such a way
  that $p\:\text{cannot}(a)$ holds in $G^+$ exactly when $p\notranz{a}$
  in $G$, for each closed term $p$ and action $a$. More precisely, we
  proceed as follows.  
  \begin{enumerate}
  \item For each constant symbol $f$ and action $a$, add the
  rule 
  \[
  \frac{}{f \:\text{cannot}(a)}
  \]
  whenever there is no transition rule in $G$ with $f$ as principal
  operation and with an $a$-labelled transition as its consequent.
  \item  For each operation $f$ with arity
  at least one and action $a$, let $R(f,a)$ be the set of rules in $G$ that
  have $f$ as principal operation and an $a$-labelled transition as consequent. We want to
  add rules for the predicate $\text{cannot}(a)$ to $G^+$ that allow us to prove
  the predicate formula $f(p_1,\ldots,p_l) \:\text{cannot}(a)$ exactly when
  $f(p_1,\ldots,p_l)$ does not afford an $a$-labelled transition in $G$. This
  occurs if, for each rule in $R(f,a)$, there is some premise that
  is not satisfied when the arguments of $f$ are $p_1,\ldots,p_l$. To
  formalize this idea, let $H(R(f,a))$ be the collection of premises of
  rules in $R(f,a)$. We say that a choice function is a function $\phi:
  R(f,a) \rightarrow H(R(f,a))$ that maps each rule in $R(f,a)$ to one of its
  premises. Let 
  \begin{eqnarray*}
  \text{neg}(x \xrightarrow{a} x') & = & x \:\text{cannot}(a) \quad \text{and} \\
  \text{neg}(x \notranz{a}) & = & x \xrightarrow{a} x' ,  \quad \text{for some } x'.  \end{eqnarray*}
  Then, for each choice function $\phi$, we add
  to $G^+$ a predicate rule of the form 
  \[
  \frac{\{\text{neg}(\phi(\xi)) \mid \xi\in
  R(f,a)\}}{f(x_1,\ldots,x_l) \:\text{cannot}(a)} ,
  \]
  where the targets of the positive
  transition formulae in the premises are chosen to be all different.
  \end{enumerate}
\end{enumerate}
The above construction ensures the validity of the following lemma. 
\begin{lemma}
\label{lm:GtoGplus}
For each closed term $p$ and action $a$,
\begin{enumerate}
\item $p \xrightarrow{a} p'$ in $G$ if, and only if, $p \xrightarrow{a} p'$ in $G^+$;
\item $p \:\text{cannot}(a)$ in $G^+$ if, and only if, $p \notranz{a}$
in $G^+$ (and therefore in $G$).
\end{enumerate}
\end{lemma}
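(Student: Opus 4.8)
The plan is to prove both statements simultaneously by induction on the structure (equivalently, the size) of the closed term $p$, exploiting the fact that in a GSOS---and hence {\preg}---system the premises of every rule test only the arguments of the principal operation. Consequently the transitions and predicates of $f(p_1,\ldots,p_l)$ depend only on the transitions and $\text{cannot}(\cdot)$ predicates of the strictly smaller terms $p_1,\ldots,p_l$, so the induction is well founded even though transitions and the $\text{cannot}$ predicates of $G^+$ are mutually dependent (transition rules now carry $\text{cannot}$ premises, while the $\text{cannot}$ rules carry transition premises through $\text{neg}$). The unique sound and supported relations of $G$ and $G^+$ referred to in the statement exist by the remarks following Definition~\ref{def:apregSys}.

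For the base case I would take $p$ to be a constant $f$. Part~1 is immediate, since the transition rules of $G$ with $f$ as principal operation have empty premises and are copied verbatim into $G^+$, whence $f \xrightarrow{a} p'$ holds in $G$ exactly when it holds in $G^+$. For Part~2, by construction step~2(a) the axiom with empty premises and conclusion $f\:\text{cannot}(a)$ is present in $G^+$ precisely when $G$ has no $a$-labelled rule for $f$, i.e. precisely when $f \notranz{a}$ in $G$; combined with Part~1 this yields $f\:\text{cannot}(a)$ in $G^+$ iff $f \notranz{a}$ in $G^+$.

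For the inductive step with $p = f(p_1,\ldots,p_l)$, $l\ge 1$, I would first settle Part~1. In the forward direction, a firing rule $\rho$ of $G$ keeps its positive transition premises on the $p_i$ (satisfied in $G^+$ by the induction hypothesis for Part~1), while each negative premise $p_i \notranz{b}$ is, in $G^+$, replaced by $p_i \:\text{cannot}(b)$, which holds in $G^+$ by the induction hypotheses for Parts~1 and~2; hence the modified rule $\rho^+$ fires in $G^+$. The converse is symmetric, using that $G^+$ adds no new transition rules. Part~2 is the crux: by Part~1 applied to $p$ itself, $f(\vec{p}) \notranz{a}$ in $G^+$ iff $f(\vec{p}) \notranz{a}$ in $G$, which holds iff every rule $\xi \in R(f,a)$ has at least one premise failing on $\vec{p}$. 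On the other side, $f(\vec{p})\:\text{cannot}(a)$ holds in $G^+$ iff some choice-function rule fires, i.e. iff there is a $\phi$ with $\text{neg}(\phi(\xi))$ true on $\vec{p}$ for every $\xi$. The key lemma-level observation, read off from the definition of $\text{neg}$ together with the induction hypotheses, is that $\text{neg}(\phi(\xi))$ holds on $\vec{p}$ in $G^+$ exactly when the chosen premise $\phi(\xi)$ fails on $\vec{p}$: a positive premise $x_i \xrightarrow{b} x'$ maps to $x_i\:\text{cannot}(b)$, which by the hypotheses means $p_i \notranz{b}$, while a negative premise $x_i \notranz{b}$ maps to $x_i \xrightarrow{b} x'$, meaning $p_i$ can perform $b$.

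The main obstacle, and the step I would spell out most carefully, is the de Morgan--style duality that closes Part~2: ``there exists a choice function selecting a failing premise from every rule'' is logically equivalent to ``every rule has some failing premise''. Since a choice function picks one premise per rule independently, this equivalence holds, and it converts the existential form of the $\text{cannot}$ rules into the universal condition characterizing the absence of an $a$-transition. I would also stress that every premise evaluation ultimately reduces to transitions and $\text{cannot}$ predicates of the arguments $p_i$, so the induction hypotheses make ``failing premise'' agree across $G$ and $G^+$. Chaining these equivalences gives $f(\vec{p})\:\text{cannot}(a)$ in $G^+$ iff $f(\vec{p}) \notranz{a}$ in $G$ iff (by Part~1) $f(\vec{p}) \notranz{a}$ in $G^+$, completing the inductive step and hence the lemma.
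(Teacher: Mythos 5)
Your proof is correct: the simultaneous structural induction is well-founded for exactly the reason you give (GSOS premises test only the arguments of the principal operation, so the mutual dependence in $G^+$ between transitions and $\text{cannot}$ predicates causes no circularity), and the choice-function/de~Morgan equivalence you spell out is precisely what makes the step~2(b) rules for $\text{cannot}(a)$ characterize the absence of $a$-labelled transitions. The paper gives no proof of this lemma---it merely asserts that ``the above construction ensures the validity of the following lemma''---so your argument supplies the intended justification rather than taking a different route from it.
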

This means that two closed terms are bisimilar in $G$ if, and only if,
they are bisimilar in $G^+$. Moreover, two closed terms are bisimilar
in $G^+$ iff they are bisimilar when we only consider the transitions
(and not the predicates $\text{cannot}(a)$).

The language $G^+$ modulo bisimilarity can be axiomatized using our
algorithm without the need for the exponentially many
restriction operators. The conversion to positive GSOS with predicates
discussed above does incur in an exponential blow-up in the number of
rules, but it gives an alternative way of generating ground-complete
axiomatizations for standard GSOS languages to the one proposed
in~\cite{Aceto:1994:TSR:184662.184663}. In general, it is useful to
have several approaches in one's toolbox, since one may choose the one
that is ``less expensive'' for the specific task at hand.  Moreover,
using positive GSOS operations, one can also try to extend the methods
from the full version of the paper~\cite{DBLP:conf/concur/Aceto94}
(see Section 7.1 in the technical report available at
{\footnotesize{\url{http://www.ru.is/~luca/PAPERS/cs011994.ps}}}) to optimize these
axiomatizations. We are currently working on applying such methods to
positive {\preg} systems with universal as well as existential
predicates, and on extending our tool~ \cite{pregax-calco-tools2011}
accordingly.

It is worth noting that the predicates $\text{cannot}(a)$ are not implicit, therefore the restrictions presented at the end of Section~\ref{sec:axiom-amoothAndSist} need not to be imposed.

\section{Conclusions and future work}
\label{sec:conclusions}

In this paper we have introduced the {\preg} rule format, a natural extension of GSOS with arbitrary predicates. Moreover, we have provided a procedure (similar to the one in~\cite{Aceto:1994:TSR:184662.184663}) for deriving sound and ground-complete axiomatizations for bisimilarity of systems that match this format. In the current approach, explicit predicates are handled by considering constants witnessing their satisfiability as summands in tree expressions. Consequently, there is no explicit predicate $P$ satisfied by a term of shape $\Sigma_{i\in I}a_i.t_i$.

The procedure introduced in this paper has also enabled the implementation of a tool \cite{pregax-calco-tools2011} that can be used to automatically reason on bisimilarity of systems specified as terms built over operations defined by {\preg} rules.

Several possible extensions are left as future work. It would be worth investigating the properties of positive {\preg} languages. By allowing only positive premises we eliminate the need of the restriction operators (\onedag) during the axiomatization process. This would enable us to deal with more general predicates over trees, such as those that may be satisfied by terms of the form $a.t$ where $a$ ranges over some subset of the collection of actions.

Another direction for future research is that of understanding the presented work from a coalgebraic perspective. The extensions from \cite{Aceto:1994:TSR:184662.184663} to the present paper, might be thought as an extension from coalgebras for a functor $\mathscr{P}(\calA \times \textit{Id})$ to a functor $\mathscr{P}(\calP) \times \mathscr{P}(\calA \times \textit{Id})$ where $\mathscr{P}$ is the powerset functor, $\calA$ is the set of actions and $\calP$ is the set of predicates. Also the language {\FINTREEPRED} coincides, apart from the recursion operator, with the one that would be obtained for the functor $\mathscr{P}(\calP) \times \mathscr{P}(\calA \times \textit{Id})$ in the context of Kripke polynomial coalgebras \cite{DBLP:conf/lics/BonsangueRS09}.

Finally, we plan to extend our axiomatization theory in order to reason on the bisimilarity of guarded recursively defined terms, following the line presented in \cite{DBLP:conf/concur/Aceto94}.

\paragraph{Acknowledgments.} The authors are grateful for the useful comments and suggestions from Alexandra Silva and three anonymous reviewers.

\bibliographystyle{eptcs}
\bibliography{formal}

\end{document}